\newcommand\munderbar[1]{
  \underaccent{\bar}{#1}}
\DeclareMathOperator*{\argmin}{arg\,min}
\DeclareMathOperator*{\argmax}{arg\,max}
\newtheorem{theorem}{Theorem}
\newtheoremstyle{nonparen}
  {3pt}{3pt}
  {\itshape}{}
  {\bfseries}{.}{.5em}
  {\thmname{#1}\ \thmnumber{#2}{\normalfont\thmnote{~#3}}}
\theoremstyle{nonparen}
\newtheorem{lemma}{Lemma}
\begin{document}
\bstctlcite{IEEEexample:BSTcontrol}
\title{Zak-OTFS ISAC with Bistatic Sensing via Semi-Blind Atomic Norm Denoising Scheme}

\author{Kecheng Zhang, Weijie Yuan,~\IEEEmembership{Senior Member,~IEEE}, Maria Sabrina Greco,~\IEEEmembership{Fellow,~IEEE}
        \thanks{Kecheng Zhang and Weijie Yuan are with Guangdong Provincial Key Laboratory of Fully Actuated System Control Theory and Technology, School of Automation and Intelligent Manufacturing, Southern University of Science and Technology, 518055 Shenzhen, China. (email: zhangkc2022@mail.sustech.edu.cn; yuanwj@sustech.edu.cn).}
        \thanks{Maria Sabrina Greco is with the Department of Information Engineering, University of Pisa, 56122 Pisa, Italy (e-mail: maria.greco@unipi.it)}
        }

\maketitle

\begin{abstract}
        Zak-transform-based orthogonal time frequency space (Zak-OTFS) modulation provides a delay-Doppler (DD) domain framework for integrated sensing and communication (ISAC) in high-mobility scenarios. Since the DD-domain channel response is directly related to the target parameters and also determines the communication link, accurate channel estimation is a key task for ISAC. However, it is challenging due to the fractional delay and Doppler shifts, which spread the channel response beyond the on-grid DD bins and lead to strong coupling between channel estimation and data detection. To address this issue, this paper proposes a semi-blind atomic norm denoising scheme for Zak-OTFS ISAC with bistatic sensing. We first derive the discrete-time input-output (I/O) relationship of Zak-OTFS with rectangular windowing. Based on this I/O relation, the joint channel parameter estimation and data detection problem is formulated as an atomic norm denoising problem, where a negative square penalty is introduced to handle the non-convex discrete constellation constraints. An accelerated iterative algorithm is then developed by combining majorization-minimization, accelerated projected gradient, and inexact accelerated proximal gradient methods. We also establish the convergence of the proposed algorithm. Simulation results show that the proposed scheme achieves super-resolution sensing accuracy and communication performance close to the perfect-CSI lower bound.
\end{abstract}

\begin{IEEEkeywords}
        Zak-transform; OTFS; ISAC; Bistatic sensing; Atomic norm denoising.
\end{IEEEkeywords}

\section{Introduction}
The growing demand for reliable wireless communication in high-mobility scenarios has motivated the research for modulation schemes that operate effectively in doubly dispersive channels \cite{lu2024integrated}. While Orthogonal Frequency Division Multiplexing (OFDM) forms the foundation of 4G and 5G networks \cite{10596930, 10969844}, its subcarrier orthogonality is easily disrupted in high-mobility multipath channels, leading to severe inter-carrier interference and performance degradation. Orthogonal Time Frequency Space (OTFS) modulation \cite{OTFS0} has emerged as a promising solution to these challenges. By mapping symbols to the delay-Doppler (DD) domain, OTFS leverages the domain's inherent sparsity and stability to achieve full diversity \cite{Fair_Compare_OTFS_OFDM_Giuseppe}, significantly outperforming OFDM in high-mobility scenarios \cite{OTFS0, Fair_Compare_OTFS_OFDM_Giuseppe}. Moreover, the delays and Doppler shifts of the DD-domain channel directly characterize the sensing targets, while the complete channel response is required for reliable data detection. Therefore, both the sensing and communication performance of OTFS-ISAC depend on accurate DD-domain channel estimation. In this paper, we consider an OTFS-ISAC system with bistatic sensing, where the receiver knows the pilot and frame structure but not the randomly generated data symbols. Our objective is to jointly detect the transmitted data symbols and estimate the channel gains, delays, and Doppler shifts, thereby enabling simultaneous target sensing and symbol demodulation at the receiver.

\subsection{Literature Review}

Existing approaches to bistatic OTFS-ISAC can be broadly divided into two categories according to how the target parameters are estimated: two-step ISAC, which first estimates the effective channel matrix and then recover the target parameters from it, and direct parameter estimation, which estimate the physical channel parameters directly from the received DD-domain symbols.

\subsubsection{Two-Step ISAC}

A representative effective-channel estimation method was proposed in \cite{Raviteja2019}, where a high-power pilot is embedded in the DD domain and the channel taps are identified by thresholding its received response. The estimated effective channel matrix can then be further processed using spectral estimation methods to extract the delay and Doppler parameters for target sensing \cite{10251770}. However, this two-step approach relies on the shift-invariant channel structure of multi-carrier OTFS with ideal bi-orthogonal pulses \cite{OTFS0,Saif_Predictability}, under which the complete effective channel matrix can be inferred from the response of a single pilot. Such ideal pulses are not physically realizable \cite{OTFS0}. With practical pulses, such as rectangular pulses, the effective channel coefficients depend on both the DD-grid locations and the actual path parameters, and therefore vary across transmitted symbols \cite{Raviteja2018,Saif_Predictability}. Consequently, the pilot response alone is insufficient to reconstruct the complete effective channel matrix, leading to degraded data detection and inaccurate parameter extraction. This issue is particularly important in bistatic ISAC, where symbol detection errors act as additional interference and further degrade channel and target parameter estimation.

To overcome the limitations of multi-carrier OTFS, \cite{Saif_Predictability} introduced Zak-OTFS, in which the inverse Zak transform maps DD-domain symbols to the time domain and the Zak transform recovers the received DD-domain signal. In Zak-OTFS, modulation, channel propagation, and demodulation can all be represented by twisted convolution. Owing to its associativity, the resulting I/O relation reduces to a single twisted convolution between the transmitted DD-domain symbols and an effective channel response. Thus, all DD-domain symbols experience the same effective channel, even with practical pulses, allowing the complete channel matrix to be inferred from the response of a single pilot. This approach is referred to as model-free channel estimation in \cite{Saif_Predictability}. The results therein show that, with an appropriately designed frame structure, it can provide accurate channel estimates for Zak-OTFS.

Nevertheless, the effectiveness of this approach depends on how much of the effective channel response is captured by the finite DD-domain pilot observation region. When the channel response spreads over a wider range, a larger portion of its energy falls outside the observed samples, resulting in channel truncation errors. The work in \cite{mehrotra2025pulse} studied the channel estimation and symbol detection performance of Zak-OTFS under different DD-domain pulse-shaping filters. It showed that pulses with lower ambiguity-function sidelobes improve channel estimation, while pulses satisfying the Nyquist ISI criterion also facilitate reliable symbol detection. However, the effective channel response generally has infinite support in both delay and Doppler under finite time-frequency resources. Therefore, it cannot be fully reconstructed from a finite number of pilot observations, and the resulting channel approximation also limits the accuracy of subsequent DD-domain parameter extraction. This motivates the direct estimation of the physical delay-Doppler parameters from the receive signal.

\subsubsection{Parameters Estimation}

In OTFS systems, on-grid delay and Doppler parameters can be readily estimated from the shifts of the pilot response \cite{Yi_Hong_EmbeddedPilot_OTFS}. In practical ISAC scenarios, however, target delays and Doppler shifts are generally off-grid and appear as fractional multiples of the corresponding resolutions. These fractional shifts spread the response across multiple DD-domain bins, making accurate parameter estimation considerably more difficult.

Various channel estimation methods have been developed for OTFS systems with fractional DD shifts. In \cite{ICC_Fractional_OTFS}, a cross-correlation-based method was proposed to estimate the channel from the DD-domain pilot response and mitigate inter-Doppler interference. The work in \cite{wang2022joint} used variational sparse Bayesian learning (SBL) to jointly estimate the channel and detect data symbols, treating the unknown data as virtual pilots. However, it considered only integer delay and Doppler shifts. To handle off-grid channels, \cite{Zhiqiang_Wei_OTFS_SBL} developed an off-grid SBL method, but its derivation relies on ideal bi-orthogonal pulses. For practical rectangular pulses, where delay and Doppler cannot be decoupled, \cite{Yaru_Shan_OTFS_SBL} proposed a grid-evolution SBL method that iteratively refines a nonuniform two-dimensional grid. A different approach was adopted in \cite{Yongzhi_WU_OTFS}, which introduced DFT-spread OTFS with superimposed pilots for terahertz ISAC and estimated the fractional DD parameters through coarse on-grid search followed by off-grid refinement.

Although these methods provide effective DD-domain parameter estimates for OTFS sensing, they still have several limitations. The SBL-based approaches in \cite{Zhiqiang_Wei_OTFS_SBL,Yaru_Shan_OTFS_SBL} rely on discretized delay-Doppler grids, so their estimation accuracy depends on the grid resolution and improves only at the cost of higher computational complexity. The same grid mismatch issue also affects the methods in \cite{ICC_Fractional_OTFS,Yongzhi_WU_OTFS}. In addition, these works adopt restrictive assumptions, such as integer delay taps in \cite{ICC_Fractional_OTFS} and prior knowledge of the number of channel paths in \cite{Zhiqiang_Wei_OTFS_SBL,Yaru_Shan_OTFS_SBL,Yongzhi_WU_OTFS}. Such assumptions are difficult to satisfy in practical high-resolution sensing systems. To the best of our knowledge, no existing method jointly performs off-grid channel parameter estimation and data detection without relying on these assumptions. This motivates the development of a unified framework for reliable OTFS-ISAC over doubly dispersive channels.

\subsection{Contributions}

In this paper, we consider performing ISAC with bistatic sensing based on Zak-OTFS. Specifically, our main contributions are summarized as follows:

\begin{itemize}
\item \textit{Discrete-Time I/O Model for Zak-OTFS:} We derive a discrete-time matrix-form I/O relation for Zak-OTFS with rectangular time- and frequency-domain windows under fractional delay and Doppler shifts. The derived model preserves the predictability property of Zak-OTFS, showing that the model-free channel estimation method remains applicable and can provide an effective initialization for the proposed receiver.

\item \textit{Novel Problem Formulation for Semi-Blind Integrated Sensing and Communication:} Based on the derived I/O relation, we establish a unified optimization framework that jointly recovers off-grid channel parameters and unknown communication symbols from the received Zak-OTFS signal. The formulation combines atomic norm regularization for gridless delay-Doppler parameter estimation with a negative square penalty that embeds discrete PSK constellation structures into the optimization process, enabling simultaneous sensing and data detection.

\item \textit{Accelerated Algorithm and Convergence Analysis:} We develop an accelerated iterative algorithm that combines majorization-minimization, accelerated projected gradient, and inexact accelerated proximal gradient updates. We establish convergence to an $\varepsilon$-stationary point with a sublinear stationarity rate. Simulation results show that the proposed method achieves super-resolution sensing, reliable multi-target detection, and superior BER performance.
\end{itemize}

The remainder of this paper is organized as follows. Section \ref{sec2:signal_model} derives the discrete-time I/O relation of Zak-OTFS, which serves as the basis for the atomic norm denoising formulation in Section \ref{sec3:problem_formulation}. Section \ref{sec4:algorithm} details the proposed accelerated algorithm and provides a rigorous convergence analysis. Finally, simulation results and conclusions are presented in Sections \ref{sec5:numerical_results} and \ref{sec6:conclusion}, respectively.

\textit{Notations:} We use $x$, $\mathbf{x}$, $\mathbf{X}$, and $\mathcal{X}$ to represent scalar, column vector, matrix, and set, respectively. The notation $\operatorname{vec}(\cdot)$ represents the vectorization of a matrix by stacking its columns. \(\operatorname{Diag}\{\mathbf a\}\) denotes the diagonal matrix formed by placing the elements of \(\mathbf a\) on its main diagonal, while \(\operatorname{diag}\{\mathbf A\}\) denotes the vector formed by extracting the main diagonal entries of matrix \(\mathbf A\). $\|\cdot\|_{2}$ denoted the $\ell_2$-norm of a matrix. $\mathbb{C}$, $\mathbb{R}$, and $\mathbb{Z}$ denote the sets of complex, real, and integer numbers, respectively; $\langle \mathbf{x}, \mathbf{y}\rangle$ refers to the inner product between two vectors. The superscript $(\cdot)^{*}$, $(\cdot)^{\text{T}}$ and $(\cdot)^{\text{H}}$ represent the conjugate, the transpose and the conjugate transpose operations, respectively. $\mathbb{I}_{\mathcal{A}}(A)$ denotes an indicator function of $\mathcal{A}$. $\otimes$ represents the Kronecker product. $A[i,j]$ represents the element in the $i$-th row and $j$-th column of matrix $\mathbf{A}$. $\mathbb{N}_{n}$ denotes the set $\{0, 1, \dots, n-1\}$.

\section{Zak-OTFS Signal Model}\label{sec2:signal_model}

This section establishes the Zak-OTFS signal model. We first describe the ISAC scenario considered in this paper and then review the general framework of Zak-OTFS and its OFDM-based implementation. Subsequently, we derive the discrete-time matrix I/O relation and demonstrate that the predictability property remains valid, thereby enabling the model-free channel estimation scheme in \cite{Saif_Predictability}.

\begin{figure}
        \centering
        \includegraphics[width=0.95\columnwidth]{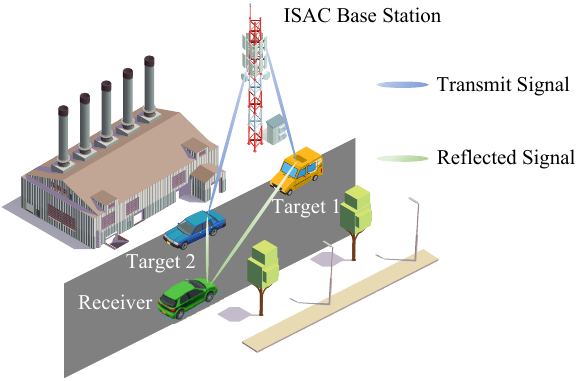}
        \caption{An illustration of the ISAC scenario.}\label{sec2_fig:scenario}
\end{figure}

\subsection{General Zak-OTFS I/O Relation}\label{sec2_1_general_DD_IO}
As illustrated in Fig. \ref{sec2_fig:scenario}, we consider a bistatic ISAC scenario, where an ISAC base station transmits the Zak-OTFS signal. The transmitted signal is reflected by multiple moving targets and subsequently collected by a spatially separated receiver. Owing to the bistatic geometry, the delay of each reflected path is determined by the total transmitter-target-receiver propagation distance, while its Doppler shift depends on the target motion relative to both the transmitter and receiver. Based on the received echoes, the receiver detects the targets and estimates their delays, Doppler shifts, and complex channel gains.

For Zak-OTFS \cite{Saif_Predictability}, we first construct the DD domain transmit signal by summing the multiplications between DD domain transmit symbols and the unfiltered basis functions, $\mathcal{Z}_{p^{\tau_l, \nu_k}}(\tau, \nu)$, i.e.,
\begin{equation}\label{sec2_eq:dd_transmit_signal}
        x_{\operatorname{dd}}(\tau, \nu) = \sum_{l=0}^{M-1} \sum_{k=0}^{N-1} X_{\operatorname{dd}}[l, k] \mathcal{Z}_{p^{\tau_l, \nu_k}}(\tau, \nu),
\end{equation}
where $X_{\operatorname{dd}}[l, k]$ is the information symbols at DD domain grid $(l, k)$, $\tau_l=\frac{l}{M \Delta f}$, $\nu_k=\frac{k}{NT}$, $T$ and $\Delta f$ satisfying $T \Delta f = 1$ are the delay and Doppler periods, respectively, and the unfiltered basis function is defined as
\begin{multline}
        \mathcal{Z}_{p^{\tau_l, \nu_k}}(\tau, \nu)   =  \sum_{m \in \mathbb{Z}} \sum_{n \in \mathbb{Z}}                                                                                                             \\
        e^{j 2 \pi \nu_k n T} \delta\left(\tau-\tau_l-n T\right) \delta\left(\nu-\nu_k-m \Delta f\right).
\end{multline}
By implementing the inverse Zak-transform\footnote{The inverse Zak-transform is defined as $s(t) \overset{\Delta}{=} \sqrt{T} \int_{0}^{\Delta f} \mathcal{Z}_{s}(t, \nu) \,d\nu$, see \cite{Saif_Predictability} for more details.}, the corresponding time domain transmit signal is given by
\begin{equation}\label{sec2_eq:td_transmit_signal}
        x(t) = \sum_{l=0}^{M-1} \sum_{k=0}^{N-1} X_{\operatorname{dd}}[l, k] p^{\tau_l, \nu_k}(t),
\end{equation}
where $p^{\tau_l, \nu_k}(t)$ is the corresponding time domain unfiltered basis signal,
\begin{equation}
        p^{\tau_l, \nu_k}(t) = \sqrt{T} \sum_{n \in \mathbb{Z}} e^{j 2 \pi \nu_k n T} \delta(t - \tau_l - nT).
\end{equation}
Since the signal \eqref{sec2_eq:td_transmit_signal} occupies infinite time and frequency domain resources, it is not physically realizable. Both the time and frequency domain resources need to be restricted to implement the Zak-OTFS in practice.

According to \cite{Saif_Predictability, Hanly_Transmitter}, the occupied time and frequency resources can be limited by performing a twisted convolution\footnote{Consider $a(\tau, \nu)$ and $b(\tau, \nu)$ are function defined on $\mathbb{R}^{2}$, the twisted convolution between them is defined as $a *_{\sigma} b (\tau, \nu) = \iint_{\mathbb{R}^{2}} a(\tau^{\prime}, \nu^{\prime})b(\tau - \tau^{\prime}, \nu - \nu^{\prime}) e^{j 2 \pi \nu^{\prime} (\tau - \tau^{\prime})} \, d\tau^{\prime} \, d \nu^{\prime}$.} with a DD domain transmit filter, $g(\tau, \nu)$, on the DD domain transmit signal in \eqref{sec2_eq:dd_transmit_signal}. The DD domain twisted convolution filtered signal is given by
\begin{equation}\label{sec2_eq:dd_filtered_transmit_signal}
        \mathcal{Z}_{x_{g}}(\tau, \nu) = g *_{\sigma} \mathcal{Z}_{x}(\tau, \nu),
\end{equation}
By implementing the inverse Zak-transform on \eqref{sec2_eq:dd_filtered_transmit_signal}, we obtain the corresponding time domain signal is denoted as $x_{g}(t)$. The time domain receive signal is
\begin{equation}\label{sec2_eq:td_receive_signal}
        r(t)=\iint h(\tau, \nu) x_{g}(t-\tau) e^{j 2 \pi \nu(t-\tau)} d \tau d \nu + n(t),
\end{equation}
where $h(\tau, \nu)$ is the doubly-dispersive channel and $n(t)$ is the time domain noise. By performing Zak-transform\footnote{For $T > 0$, the Zak-transform of a continuous time signal $s(t)$ is defined as $\mathcal{Z}_{s}(\tau, \nu) \overset{\Delta}{=} \sqrt{T} \sum_{k=-\infty}^{\infty} s(\tau + kT) e^{-j 2 \pi k \nu T}$. See \cite{Saif_Predictability} for more details.} on \eqref{sec2_eq:td_receive_signal}, the corresponding DD domain receive signal can be represented as \cite{Saif_Predictability,Hanly_OTFS_Receiver}
\begin{equation}\label{sec2_eq:dd_receive_signal}
        \mathcal{Z}_{r}(\tau, \nu) = h *_\sigma \mathcal{Z}_{x_g}(\tau, \nu) = h *_\sigma\left(g *_\sigma x_{\operatorname{dd}}\right)(\tau, \nu).
\end{equation}
The receiver side performs another twisted convolution with a DD domain receiving filter, $\tilde{g}(\tau, \nu)$, on the DD domain receive signal \eqref{sec2_eq:dd_receive_signal}, and gets the DD domain filtered signal,
\begin{equation}\label{sec2_eq:receive_twisted_conv}
        y_{\operatorname{dd}}(\tau, \nu) = \tilde{g} *_{\sigma} \mathcal{Z}_{r}(\tau, \nu).
\end{equation}

Put these results together, the DD domain I/O relation of Zak-OTFS is given by
\begin{equation}
        y_{\operatorname{dd}}(\tau, \nu) = h_{\operatorname{eff}}(\tau, \nu) *_{\sigma} x_{\operatorname{dd}}(\tau, \nu) + \tilde{n}_{\operatorname{dd}}(\tau, \nu),
\end{equation}
where $h_{\operatorname{eff}}(\tau, \nu) = \tilde{g} *_{\sigma} h *_{\sigma} g(\tau, \nu)$ is the effective channel response, and $\tilde{n}_{\operatorname{dd}}(\tau, \nu) = \tilde{g} *_{\sigma} \mathcal{Z}_{n}(\tau, \nu)$ is the DD domain effective noise. By sampling $y_{\operatorname{dd}}(\tau, \nu)$ at $\tau = \tau_l$ and $\nu = \nu_k$ and omitting the noise term, we have
\begin{equation}\label{sec2_eq:DD_discret_IO}
        Y_{\operatorname{dd}}[l, k] = \sum_{l^{\prime} \in \mathbb{Z}} \sum_{k^{\prime} \in \mathbb{Z}} h_{\operatorname{eff}}[l^{\prime}, k^{\prime}] X_{\operatorname{dd}}[l-l^{\prime}, k-k^{\prime}] e^{j 2 \pi \frac{k^{\prime} (l - l^{\prime})}{MN}},
\end{equation}
where $Y_{\operatorname{dd}}[l, k]$ is the DD domain received symbols at $(l, k)$, $h_{\operatorname{eff}}[l, k]$ and $X_{\operatorname{dd}}[l, k]$ means sampling $h_{\operatorname{eff}}(\tau, \nu)$ and $x_{\operatorname{dd}}(\tau, \nu)$ at $\tau = \tau_l$ and $\nu = \nu_k$, respectively. The equation \eqref{sec2_eq:DD_discret_IO} is called the discrete twisted convolution by \cite{Saif_Predictability}.

\subsection{Specific Zak-OTFS Implementation via OFDM Framework}\label{sec2_2_DD_IO_OFDM_based}
In this subsection, we consider a specific implementation of Zak-OTFS and introduce how to realize it under the OFDM framework. The DD domain transmit filter in this specific implementation is constructed as
\begin{equation}\label{sec2_eq:dd_filter}
        g(\tau, \nu) = \alpha(\tau) \beta(\nu) e^{j 2 \pi \tau \nu}.
\end{equation}
According to \cite{Hanly_Transmitter}, performing the twisted convolution \eqref{sec2_eq:dd_filtered_transmit_signal} with $g(\tau, \nu)$ given in \eqref{sec2_eq:dd_filter} is equivalent to implement frequency domain windowing first and then time domain windowing on \eqref{sec2_eq:td_transmit_signal}. The corresponding filtered basis function is expressed as
\begin{equation}
        p^{\tau_l, \nu_k}_{g}(t) = \underbrace{B(t) e^{j 2 \pi \nu_k (t - \tau_l)}}_{\text{Time-limited tone}} \underbrace{\sum_{m \in \mathbb{Z}} A(m \Delta f + \nu_k) e^{j 2 \pi m \Delta f (t - \tau_l)}}_{\text{Pulse train}},
\end{equation}
where $A(f) = \int_{\mathbb{R}} \alpha(\tau) e^{-j 2 \pi f \tau} \, d\tau$ is the frequency domain window function, $B(t) = \int_{\mathbb{R}} \beta(\nu) e^{j 2 \pi \nu t} \, d\nu$ is the time domain window function, $p^{\tau_l, \nu_k}_{g}(t)$ is known as \textit{pulsone} \cite{Saif_Predictability}.

\begin{figure*}[t]
        \centering
        \includegraphics[width=0.8\textwidth]{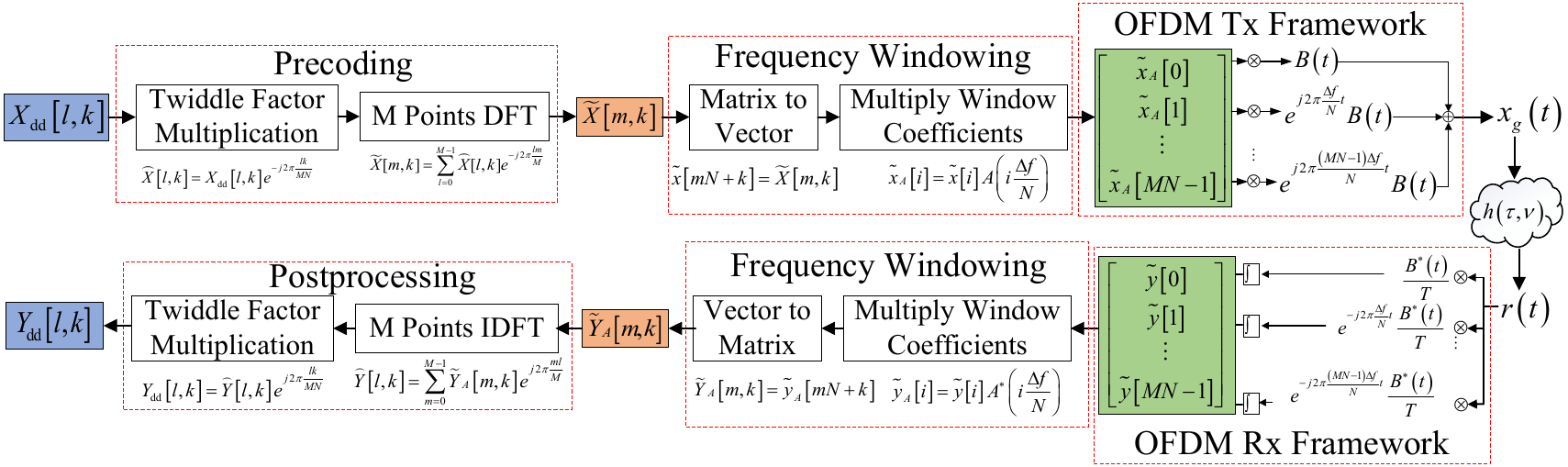}
        \caption{OFDM-based transceiver framework of windowed Zak-OTFS \cite{Hanly_Transmitter, Zak_OTFS_Receiver}.}
        \label{fig:ZakOTFS_OFDM_framework}
\end{figure*}

We assume the channel is sparse and there are $P$ resolvable paths, which gives
\begin{equation}
        h(\tau, \nu) = \sum_{i=1}^{P} h_{i} \delta(\tau - \tau_{i}) \delta(\nu - \nu_{i}),
\end{equation}
where $\tau_i$ and $\nu_i$ are the delay and Doppler shift of the $i$-th path, respectively. Then, the time-domain receive signal becomes
\begin{equation}
        r(t) = \sum_{i=1}^{P} h_{i} x_{g}(t - \tau_{i}) e^{j 2 \pi \nu_{i} (t - \tau_{i})} + n(t).
\end{equation}
The DD domain receive filter of this specific implementation is constructed as
\begin{equation}\label{sec2_eq:dd_receive_filter}
        \tilde{g}(\tau, \nu) = \alpha^{*}(-\tau) \beta^{*}(-\nu).
\end{equation}
Performing the twisted convolution \eqref{sec2_eq:receive_twisted_conv} with $\tilde{g}(\tau, \nu)$ defined above is equivalent to time windowing the receive signal $r(t)$ with $B^{*}(t)/T$ followed by frequency windowing it with $A^{*}(f)$.

The transceiver procedure of Zak-OTFS defined in this subsection can be implemented via the OFDM framework, as shown in Fig. \ref{fig:ZakOTFS_OFDM_framework}. It can be verified that the DD domain I/O relation under the transceiver frameworks shown in Fig. \ref{fig:ZakOTFS_OFDM_framework} is exactly the same as \eqref{sec2_eq:DD_discret_IO}.

\subsection{Matrix Form I/O Relation of Zak-OTFS}\label{sec2:matrix_IO}

The aforementioned continuous-time I/O relation provides mathematical insights but it is inconvenient for digital implementation. Although some simulation-oriented works truncate the infinite DD-domain effective channel response \cite{mehrotra2026pulse}, they lack an error characterization of the truncation approximation. To bridge this gap and facilitate computer simulations, we derive a discrete-time counterpart for the Zak-OTFS I/O relationship.

As illustrated in Fig. \ref{fig:ZakOTFS_OFDM_framework}, the receive framework processes the receive signal $r(t)$ through three cascading stages: OFDM demodulation, frequency windowing, and postprocessing. For the discrete-time Zak-OTFS, the receive signal is sampled at a rate of $F_s = M \Delta f$. The intermediate vector $\tilde{\mathbf{y}}$ is then obtained by computing the discrete-time cross-correlation between the sampled receive signal and the $n$-th windowed subcarrier $[B^{*}(k/F_s)/Te^{-j 2 \pi \frac{n k \Delta f}{N F_s}}]_{k=0}^{MN-1}$. Then, we perform the subsequent procedures to obtain the DD domain receive symbols.

Without loss of generality, we consider that both the time and frequency windows are rectangular functions, i.e., $A(f)$ and $B(t)$ are $1$ within the interval $[0, M\Delta f]$ and $[0, NT]$, respectively, and zero, otherwise. Assume the cyclic-prefix (CP) is added in front of the transmit signal $x_{g}(t)$. After removing the CP at the receiver side, the $k$-th sample of the time domain receive signal is
\begin{equation}\label{sec2_eq:td_receive_signal_sampled}
        r[k] = \sum_{i=1}^{P} \sum_{n = 0}^{MN - 1} h_{i} s_{n} e^{j 2 \pi \left(\frac{n}{NT} + \nu_{i}\right) \left(\frac{k}{M\Delta f} - \tau_{i}\right)} + n[k],
\end{equation}
where $s_{n}$ is the $n$-th element of the transmit symbols sequence, $\mathbf{s}$, given by
\begin{equation}\label{sec2_eq:Zak_OTFS_modulation}
        \mathbf{s} = \operatorname{vec}\{(\mathbf{X}_{\operatorname{dd}} \cdot \mathbf{W})^{\operatorname{T}}\mathbf{F}_{M}\} = (\mathbf{F}_{M} \otimes \mathbf{I}_{N}) \widetilde{\mathbf{W}} \mathbf{x}_{\operatorname{dd}},
\end{equation}
$\mathbf{X}_{\operatorname{dd}} \in \mathbb{C}^{M \times N}$ denote the DD domain symbol matrix, $\mathbf{x}_{\operatorname{dd}} = \operatorname{vec}\{\mathbf{X}_{\operatorname{dd}}^{\operatorname{T}}\}$ is its vectorized form, $\mathbf{W} \in \mathbb{C}^{M \times N}$ is the twiddle factor matrix, $\widetilde{\mathbf{W}} = \operatorname{Diag}\{\operatorname{vec}\{\mathbf{W}^{\operatorname{T}}\}\}$, and $\mathbf{F}_{M} \in \mathbb{C}^{M \times M}$ is the discrete Fourier transform matrix satisfying $\mathbf{F}_{M} \mathbf{F}_{M}^{\operatorname{H}} = \mathbf{I}_{M}$. Based on \eqref{sec2_eq:td_receive_signal_sampled}, the time domain receive signal can be rewritten into vector form,
\begin{equation}\label{sec2_eq:td_vec_receive_signal}
        \mathbf{r} = \sum_{i=1}^{P} h_{i} \mathbf{D}_{\nu_{i}} \mathbf{F}_{MN}^{\operatorname{H}} \mathbf{B}_{\tau_{i}} \mathbf{s} + \mathbf{n},
\end{equation}
where $\mathbf{B}_{\tau} = \operatorname{Diag}\left\{\left[1, e^{-j2\pi\frac{\tau}{NT}}, \dots, e^{-j2\pi\frac{(MN-1)\tau}{NT}}\right]\right\}$, $\mathbf{D}_{\nu} = \operatorname{Diag}\left\{\left[1, e^{j2\pi\frac{\nu}{M\Delta f}}, \dots, e^{j2\pi\frac{(MN-1)\nu}{M\Delta f}}\right]\right\}$, and $\mathbf{n} \in \mathbb{C}^{MN \times 1}$ denotes the receiver noise vector, which follows a complex Gaussian distribution, i.e.,
$\mathbf{n} \sim \mathcal{CN}\!\left(\mathbf{0}, \sigma_n^2 \mathbf{I}_{MN}\right)$.

Based on the receive framework, the DD domain receive signal can also be written in vector form,
\begin{equation}\label{sec2_eq:DD_receive_vec}
        \mathbf{y}_{\operatorname{dd}} = \frac{1}{T} \widetilde{\mathbf{W}}^{\operatorname{H}} (\mathbf{F}_{M}^{\operatorname{H}} \otimes \mathbf{I}_{N}) \mathbf{F}_{MN} \mathbf{r},
\end{equation}
where the DD domain receive matrix $\mathbf{Y}_{\operatorname{dd}}$ in Fig. \ref{fig:ZakOTFS_OFDM_framework} is obtained by reshaping $\mathbf{y}_{\operatorname{dd}}$ back into matrix form by constructing the columns one-by-one. The I/O relation \eqref{sec2_eq:DD_receive_vec} can be rewritten as follows
\begin{equation}\label{sec2_eq:matrix_DD_IO}
        \mathbf{y}_{\operatorname{dd}} = \mathbf{H}_{\operatorname{eff}} \mathbf{x}_{\operatorname{dd}} + \tilde{\mathbf{n}},
\end{equation}
where $\tilde{\mathbf{n}} = \widetilde{\mathbf{W}}^{\operatorname{H}} (\mathbf{F}_{M}^{\operatorname{H}} \otimes \mathbf{I}_{N}) \mathbf{F}_{MN} \mathbf{n}$ is the DD domain effective noise vector, and $\mathbf{H}_{\operatorname{eff}}$ is the DD domain effective channel matrix given by
\begin{equation}\label{sec2_eq:eff_DD_channel_mtx}
        \mathbf{H}_{\operatorname{eff}} = \frac{1}{T} \widetilde{\mathbf{W}}^{\operatorname{H}} (\mathbf{F}_{M}^{\operatorname{H}} \otimes \mathbf{I}_{N}) \mathbf{F}_{MN} \mathbf{H}_{\operatorname{TD}} \mathbf{F}_{MN}^{\operatorname{H}} (\mathbf{F}_{M} \otimes \mathbf{I}_{N}) \widetilde{\mathbf{W}},
\end{equation}
where $\mathbf{H}_{\operatorname{TD}} = \sum_{i=1}^{P} h_{i} \mathbf{D}_{\nu_{i}} \mathbf{F}_{MN}^{\operatorname{H}} \mathbf{B}_{\tau_{i}} \mathbf{F}_{MN}$ is the time domain channel matrix. The $(k^{\prime}M + l^{\prime}, kM + l)$-th element in $\mathbf{H}_{\operatorname{eff}}$ is given by
\begin{multline}\label{sec2_eq:effective_channel_matrix}
        \hspace{-1em}H_{\operatorname{eff}}[k^{\prime}M + l^{\prime}, kM + l] = \frac{1}{T} \sum_{i=1}^{P} \sum_{m=0}^{M-1} \sum_{m^{\prime} = 0}^{M-1} h_{i} e^{j 2 \pi \frac{(l^{\prime} - l - l_{\tau_{i}})(k + m N)}{MN}}\\
        \sum_{n=0}^{MN-1} e^{-j 2 \pi n \frac{k^{\prime} - k - (m-m^{\prime}) N - k_{\nu_{i}}}{MN}}  e^{j 2 \pi \frac{l^{\prime} [k^{\prime} - k - (m-m^{\prime}) N]}{MN}}.
\end{multline}
Noticing that we have the equation
\begin{figure*}
        \begin{equation}\label{sec2_eq:freq_ambiguity_function_summation}
                \sum_{m \in \mathbb{Z}} e^{j 2 \pi \frac{k m}{N}} \mathcal{Y}_{A}(\tau_{l^{\prime}} - \tau_{l} - \tau - m T, -(\nu_{k^{\prime}} - \nu_{k} - \bar{m} \Delta f)) = \frac{1}{T} \sum_{m \in \mathbb{Z}} A(\nu_{k} + m \Delta f) A^{*}((m - \bar{m}) \Delta f + \nu_{k^{\prime}}) e^{j 2 \pi \frac{(l^{\prime} - l - l_{\tau}) (m N + k)}{M N}}
        \end{equation}
\end{figure*}
in \eqref{sec2_eq:freq_ambiguity_function_summation} by applying the Poisson summation, where $\mathcal{Y}_{A}(\tau, \nu) = \int_{\mathbb{R}} A(f) A^{*}(f - \nu) e^{j 2 \pi f \tau} \, df$ is the ambiguity function of the frequency window \cite{Hanly_OTFS_Receiver}. Recap that $A(f) = 1$ when $f \in [0, M \Delta f]$, and it is zero, otherwise; together with the fact that $k \in \mathbb{N}_{N}$ and $k^{\prime} \in \mathbb{N}_{N}$, the value of \eqref{sec2_eq:freq_ambiguity_function_summation} is nonzero only when $m \in \mathbb{N}_{M}$ and $(m - \bar{m}) \in \mathbb{N}_{M}$. By denoting $\bar{m} = m - m^{\prime}$, and substituting \eqref{sec2_eq:freq_ambiguity_function_summation} into \eqref{sec2_eq:effective_channel_matrix}, the effective channel matrix can be rewritten as
\begin{multline}\label{sec2_eq:element_wise_eff_ch}
        H_{\operatorname{eff}}[k^{\prime}M + l^{\prime}, kM + l] = \sum_{m \in \mathbb{Z}} \sum_{\bar{m} \in \mathbb{Z}} e^{j 2 \pi \frac{(k^{\prime} - k - \bar{m} N)(l + m M)}{MN}}\\
        e^{j 2 \pi \frac{m k}{N}} h_{\operatorname{eff}}[l^{\prime} - l - m M, k^{\prime} - k - \bar{m} N],
\end{multline}
where $h_{\operatorname{eff}}[l, k]$ is the effective channel response given by
\begin{equation}\label{sec2_eq:h_eff_specific}
        h_{\operatorname{eff}}[l, k] = \sum_{i=1}^{P} h_{i} e^{j 2 \pi \frac{lk}{MN}} \mathcal{Y}_{A}(\tau_{l} - \tau_{i}, -\nu_{k}) \mathcal{X}_{\tilde{B}}(0, k - k_{\nu_{i}}),
\end{equation}
with $\mathcal{X}_{\tilde{B}}(0, k_{\nu}) = \sum_{n=0}^{MN-1} e^{-j 2 \pi \frac{n k_{\nu}}{MN}}$ being the zero delay-cut on the discrete periodic ambiguity function \cite{benedetto2009phase} of the time domain window samples, i.e., $[B(n / F_s)]_{n=0}^{MN-1}$. Combine \eqref{sec2_eq:matrix_DD_IO} and \eqref{sec2_eq:element_wise_eff_ch} together, the element-wise I/O relation for the discretization form of the framework in Fig. \ref{fig:ZakOTFS_OFDM_framework} can also be expressed as the discretized twisted convolution as in \eqref{sec2_eq:DD_discret_IO} with $h_{\operatorname{eff}}[l, k]$ given in \eqref{sec2_eq:h_eff_specific}, meaning that the predictability property \cite{Saif_Predictability} remains valid for the discrete-time Zak-OTFS system.

\subsection{Model-Free Channel Estimation with Embedded Pilot}\label{sec2:embedded_pilot_scheme}

\begin{figure}[t]
        \centering
        \includegraphics[width=0.8\columnwidth]{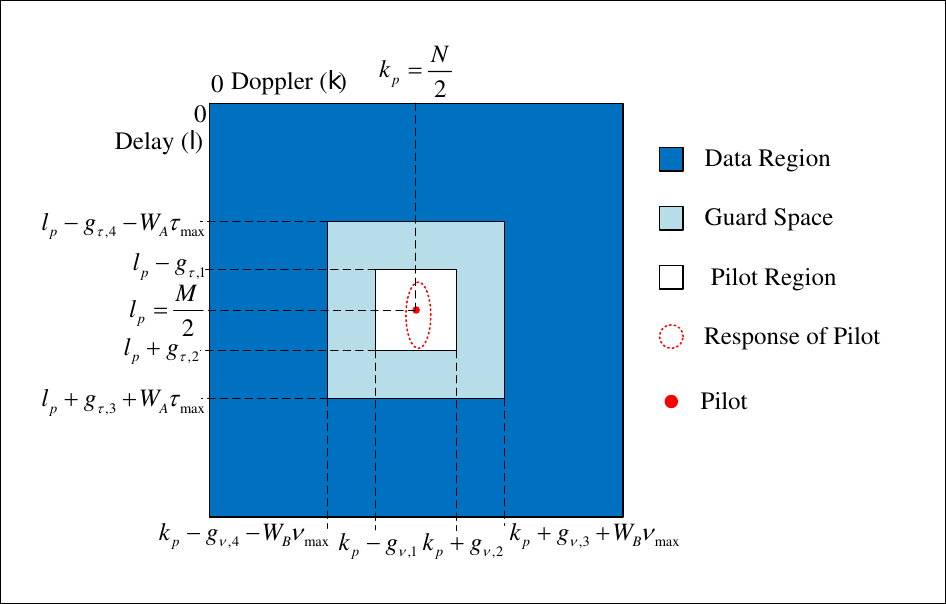}
        \caption{Zak-OTFS frame with embedded pilot.}\label{sec2_fig:embedded_pilot_illustration}
\end{figure}

The Zak-OTFS frame structure with the embedded pilot scheme is shown in Fig. \ref{sec2_fig:embedded_pilot_illustration}. The sizes of guard space and pilot region of the Zak-OTFS frame are determined according to the channel characters, i.e., the maximum delay, $\tau_{\max}$, and the maximum Doppler, $\nu_{\max}$. By denoting the pilot symbol as $x_{p}$, which is placed at $(l_p, k_p) = \left(\lfloor \frac{M}{2} \rfloor, \lfloor \frac{N}{2} \rfloor\right)$, the received pilot response is
\begin{multline}\label{sec2_eq:pilot_response}
        y_{\operatorname{dd}, p}[l, k] =  x_p h_{\operatorname{eff}}[l - l_p, k - k_p] e^{j 2 \pi \frac{(k - k_p)l_p}{MN}} \\
        + \sum_{m \in \mathbb{Z} \backslash \{0\}} \sum_{n \in \mathbb{Z} \backslash \{0\}} x_p h_{\operatorname{eff}}[l - l_p - nM, k - k_p - mN] \\
        e^{j 2 \pi \frac{nk}{N}} e^{j 2 \pi \frac{(k - k_p - mN)(l_p + nM)}{MN}}.
\end{multline}
Under the crystalline regime condition \cite{Saif_Predictability}, which holds when $\tau_{\max} < T$, $\nu_{\max} < \Delta f$, and $M$ and $N$ are large enough, the second term in \eqref{sec2_eq:pilot_response} can be ignored so that the effective channel response can be approximated by sampling the response in pilot region, i.e.,
\begin{equation}\label{sec2_channel_estimation_scheme}
        \hat{h}_{\operatorname{eff}}[l, k] = \begin{cases}
                \frac{y_{\operatorname{dd}}[l+l_p, k+k_p]}{x_p} e^{-j 2\pi \frac{k l_p}{MN}}, \text{if } (l, k) \text{ in pilot region}, \\
                0,                                        \text{otherwise}.
        \end{cases}
\end{equation}
The estimated effective channel matrix can be reconstructed through \eqref{sec2_eq:element_wise_eff_ch} and \eqref{sec2_channel_estimation_scheme}.

\section{Problem Formulation}\label{sec3:problem_formulation}

In this section, we formulate the optimization problem for Zak-OTFS bistatic ISAC. We first formulate the joint parameter estimation and symbol detection task as an atomic norm denoising problem to exploit the sparse nature of the DD channel. Subsequently, to address the computational intractability caused by the discrete PSK constellation constraints on information symbols, we introduce a penalty-based reformulation that relaxes the discrete constraints into a convex set, setting the stage for the efficient algorithm developed in the next section.

\subsection{Atomic Norm Formulation}\label{sec3:atomic_norm_formulation}

The time-domain I/O relation derived in \eqref{sec2_eq:td_vec_receive_signal} can be re-expressed in a compact matrix form,
\begin{equation}
        \mathbf{r} = (\mathbf{s}^{\operatorname{T}} \otimes \mathbf{I}_{MN}) \operatorname{Diag}\{\operatorname{vec}\{\mathbf{F}_{MN}^{\operatorname{H}}\}\} \mathbf{h} + \mathbf{n},
\end{equation}
where the channel vector $\mathbf{h}$ is defined as $\mathbf{h} = \sum_{i=1}^{P} h_i \mathbf{b}_{\tau_i} \otimes \mathbf{d}_{\nu_i}$, with $\mathbf{b}_{\tau_i} = \operatorname{diag}\{\mathbf{B}_{\tau_i}\}$ and $\mathbf{d}_{\nu_i} = \operatorname{diag}\{\mathbf{D}_{\nu_i}\}$. To leverage the inherent sparsity of the channel in the DD domain, we employ the atomic norm \cite{Chi2020} of $\mathbf{h}$ as a regularizer. The joint parameter estimation and symbol detection task is formulated as an atomic norm denoising problem:
\begin{subequations}\label{sec3_prob:origin_atomic_norm}
        \begin{align}
                 & \min_{\mathbf{x}_{\operatorname{dd}, d}, \mathbf{h}} \frac{1}{2}\|\mathbf{r} - (\mathbf{s}^{\operatorname{T}} \otimes \mathbf{I}_{MN}) \mathbf{G} \mathbf{h}\|_{2}^{2} + \eta \|\mathbf{h}\|_{\mathcal{A}}\label{sec3_prob:origin_atomic_norm_prob} \\
                 & \;\;\;\text{s.t.}\; \mathbf{x}_{\operatorname{dd}, d} \in \mathcal{S}\label{sec3_prob:origin_atomic_norm_constraint},
        \end{align}
\end{subequations}
where $\mathbf{x}_{\operatorname{dd}, d}$ represents the vector of data information symbols located within the `data region' of the Zak-OTFS frame (which is a sub-vector of $\mathbf{x}_{\operatorname{dd}}$), $\mathbf{G} = \operatorname{Diag}\{\operatorname{vec}\{\mathbf{F}_{MN}^{\operatorname{H}}\}\}$ is a diagonal matrix, and $\eta$ is the regularization parameter. The set $\mathcal{S}$ denotes the discrete alphabet set (e.g., BPSK or QPSK). The atomic norm $\|\mathbf{h}\|_{\mathcal{A}}$ is defined as \cite{Chi2020}:
\begin{equation}\label{sec3_def:atomic_norm}
        \|\mathbf{h}\|_{\mathcal{A}} = \inf\{t > 0 \mid \mathbf{h} \in t \cdot \operatorname{conv}\{\mathcal{A}\}\},
\end{equation}
where $\mathcal{A}$ is the set of atoms \cite{Chi2020}:
\begin{multline}
        \mathcal{A} = \{e^{j \phi} \mathbf{a}(\tau, \nu) \mid \mathbf{a}(\tau, \nu) = \mathbf{b}_{\tau} \otimes \mathbf{d}_{\nu},\\
        \phi \in [0, 2\pi), (\tau, \nu) \in \varOmega\},
\end{multline}
and $\operatorname{conv}\{\mathcal{A}\}$ denotes the convex hull of $\mathcal{A}$.

Directly solving problem \eqref{sec3_prob:origin_atomic_norm} presents significant challenges. First, the definition of the atomic norm involves an infinite-dimensional convex hull, which typically requires reformulation into a semi-definite programming (SDP) problem \cite{Chi2020}. While methods like the alternating direction method of multipliers (ADMM) can solve such SDPs, they remain computationally expensive for large-scale systems. More critically, the constraint $\mathbf{x}_{\operatorname{dd}, d} \in \mathcal{S}$ imposes a discrete, non-convex feasible set, rendering \eqref{sec3_prob:origin_atomic_norm} a mixed-integer programming problem that is generally NP-hard. Therefore, a new algorithmic framework is required to efficiently solve \eqref{sec3_prob:origin_atomic_norm} while properly handling both the atomic-norm regularization and the discrete symbol constraints.

\subsection{Penalty-Based Reformulation}\label{sec3:penalty_model}

To address the intractability of the discrete alphabet constraint, we employ a penalty method that relaxes the discrete set $\mathcal{S}$ to its convex hull while enforcing structure through a negative square penalty term \cite{Shao2019, wu2024quantized}. This strategy transforms the combinatorial problem into a convex optimization problem. Specifically, it relaxes the constraint $\mathbf{x}_{\operatorname{dd}, d} \in \mathcal{S}$ to $\mathbf{x}_{\operatorname{dd}, d} \in \bar{\mathcal{S}}$, where $\bar{\mathcal{S}}=\operatorname{conv}\{\mathcal{S}\}$ is the convex hull of the constellation. Ideally, the solution should lie on the vertices of $\bar{\mathcal{S}}$, which correspond to the original discrete symbols. To encourage this, a concave penalty term $-\rho \|\mathbf{x}_{\operatorname{dd}, d}\|_{2}^{2}$ is added to the objective function \cite{Shao2019, wu2024quantized}, resulting in the following formulation:
\begin{equation}\label{sec4_prob:penalty_model_anm}
        \hspace{-0.6em}\begin{aligned}
                 & \min_{\mathbf{x}_{\operatorname{dd}, d}, \mathbf{h}} \frac{1}{2}\|\mathbf{r} - (\mathbf{s}^{\operatorname{T}} \otimes \mathbf{I}_{MN}) \mathbf{G} \mathbf{h}\|_{2}^{2} - \rho \|\mathbf{x}_{\operatorname{dd}, d}\|_{2}^{2} + \eta \|\mathbf{h}\|_{\mathcal{A}} \\
                 & \;\;\;\text{s.t.}\; \mathbf{x}_{\operatorname{dd}, d} \in \bar{\mathcal{S}},
        \end{aligned}
\end{equation}
where $\rho > 0$ is the penalty parameter. This approach is theoretically underpinned by the following lemma, which establishes the equivalence between the relaxed penalty problem \eqref{sec4_prob:penalty_model_anm} and the original problem \eqref{sec3_prob:origin_atomic_norm} for sufficiently large penalty parameters.
\begin{lemma}[\cite{wu2024quantized}]\label{sec4_lemma:penalty_parameter}
        For any fixed channel vector $\mathbf{h}$, there exists a threshold $\rho_0 > 0$ such that for all $\rho > \rho_0$, the (local) optimal solution for $\mathbf{x}_{\operatorname{dd}, d}$ in the penalty-based problem \eqref{sec4_prob:penalty_model_anm} is identical to that of the original discrete problem \eqref{sec3_prob:origin_atomic_norm}.
\end{lemma}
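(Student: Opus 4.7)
With $\mathbf{h}$ held fixed, the atomic-norm term $\eta\|\mathbf{h}\|_{\mathcal{A}}$ is constant and can be dropped, so the problem in $\mathbf{x}_{\operatorname{dd},d}$ reduces to minimizing
\begin{equation*}
\phi_\rho(\mathbf{x}_{\operatorname{dd},d}) = \tfrac{1}{2}\bigl\|\mathbf{r} - (\mathbf{s}^{\operatorname{T}}\otimes\mathbf{I}_{MN})\mathbf{D}\mathbf{h}\bigr\|_{2}^{2} - \rho\|\mathbf{x}_{\operatorname{dd},d}\|_{2}^{2}
\end{equation*}
over $\bar{\mathcal{S}}$. Since $\mathbf{s}$ depends linearly on $\mathbf{x}_{\operatorname{dd},d}$ through \eqref{sec2_eq:Zak_OTFS_modulation} (the pilot and guard symbols are fixed), the data-fitting term is a convex quadratic in $\mathbf{x}_{\operatorname{dd},d}$. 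Letting $\mathbf{Q}_{\mathbf{h}}$ denote its real Hessian with respect to the stacked real and imaginary parts of $\mathbf{x}_{\operatorname{dd},d}$, I would set $\rho_{0} = \tfrac{1}{2}\lambda_{\max}(\mathbf{Q}_{\mathbf{h}})$. Then for any $\rho > \rho_{0}$ the Hessian of $\phi_\rho$ equals $\mathbf{Q}_{\mathbf{h}} - 2\rho\mathbf{I}$, which is negative definite, and $\phi_\rho$ is strongly concave on $\bar{\mathcal{S}}$.

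\textbf{Local minima must lie at vertices.} The next step is to invoke the standard fact that every local minimizer of a strongly concave function on a convex polytope is an extreme point. Suppose, toward a contradiction, that $\mathbf{x}^{\star}\in\bar{\mathcal{S}}$ is a local minimum of $\phi_\rho$ that is not a vertex. Then $\mathbf{x}^{\star}$ lies in the relative interior of a face of positive dimension, so some direction $\mathbf{d}\neq\mathbf{0}$ satisfies $\mathbf{x}^{\star}\pm t\mathbf{d}\in\bar{\mathcal{S}}$ for all sufficiently small $t>0$. Strong concavity gives $\phi_\rho(\mathbf{x}^{\star}+t\mathbf{d})+\phi_\rho(\mathbf{x}^{\star}-t\mathbf{d})<2\phi_\rho(\mathbf{x}^{\star})$, so $\phi_\rho$ strictly decreases along at least one of $\pm\mathbf{d}$, contradicting local optimality. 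Hence every local minimizer of $\phi_\rho$ over $\bar{\mathcal{S}}$ must be an extreme point of $\bar{\mathcal{S}}$.

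\textbf{Vertex set of $\bar{\mathcal{S}}$ equals $\mathcal{S}$.} For constant-modulus PSK constellations, every $\mathbf{x}\in\mathcal{S}$ has the same squared norm $\|\mathbf{x}\|_{2}^{2}=N_{d}$, where $N_{d}$ is the length of the data block. Because $\|\cdot\|_{2}^{2}$ is strictly convex, $\mathcal{S}$ is precisely the set of maximizers of $\|\cdot\|_{2}^{2}$ over $\bar{\mathcal{S}}$: any $\mathbf{x}\in\bar{\mathcal{S}}\setminus\mathcal{S}$ is a nontrivial convex combination of elements of $\mathcal{S}$ and so has strictly smaller squared norm, ruling it out as an extreme point. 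Conversely, if some $\mathbf{v}\in\mathcal{S}$ could be written as $\tfrac{1}{2}(\mathbf{y}+\mathbf{z})$ with $\mathbf{y}\neq\mathbf{z}$ in $\bar{\mathcal{S}}$, strict convexity would give $\|\mathbf{v}\|_{2}^{2}<\tfrac{1}{2}(\|\mathbf{y}\|_{2}^{2}+\|\mathbf{z}\|_{2}^{2})\le N_{d}$, contradicting $\|\mathbf{v}\|_{2}^{2}=N_{d}$; hence $\mathbf{v}$ is indeed extreme. The extreme-point set of $\bar{\mathcal{S}}$ therefore coincides with $\mathcal{S}$, and combined with the preceding step every local minimizer of $\phi_\rho$ lies in $\mathcal{S}$.

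\textbf{Equivalence and main obstacle.} Finally, on $\mathcal{S}$ the penalty $-\rho\|\mathbf{x}_{\operatorname{dd},d}\|_{2}^{2}$ collapses to the constant $-\rho N_{d}$, so $\phi_\rho$ restricted to $\mathcal{S}$ differs from the original discrete objective in \eqref{sec3_prob:origin_atomic_norm} only by an additive constant. The local minimizers of \eqref{sec4_prob:penalty_model_anm} over $\bar{\mathcal{S}}$ therefore coincide with those of \eqref{sec3_prob:origin_atomic_norm} over $\mathcal{S}$, which is the claim of the lemma. The step I expect to be the main obstacle is not the vertex characterization itself but the careful treatment of \emph{local} (rather than global) optimality: the two-sided perturbation argument has to rule out local minima in the relative interior of every positive-dimensional face of $\bar{\mathcal{S}}$ simultaneously, and this is what forces $\rho_{0}$ to dominate the full spectrum of $\mathbf{Q}_{\mathbf{h}}$ rather than merely a single directional curvature.
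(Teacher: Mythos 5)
The paper does not prove this lemma at all; it is imported verbatim from \cite{wu2024quantized}, so there is no in-paper proof to compare against. Your argument is correct and is essentially the standard negative-square-penalty argument from that line of work: with $\mathbf{h}$ fixed the data-fitting term is a convex quadratic in $\mathbf{x}_{\operatorname{dd},d}$ (via the affine map \eqref{sec2_eq:Zak_OTFS_modulation}), choosing $\rho$ beyond half its largest real-Hessian eigenvalue makes the reduced objective strongly concave, every local minimizer of a strongly concave function over a polytope is an extreme point, and for constant-modulus PSK the extreme points of $\operatorname{conv}\{\mathcal{S}\}$ are exactly $\mathcal{S}$, on which the penalty is an additive constant. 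The only point worth tightening is the closing claim that the \emph{local} minimizers of the two problems ``coincide'': for a discrete feasible set every point is trivially a local minimizer, so the precise content of the lemma is that every local minimizer of \eqref{sec4_prob:penalty_model_anm} lies in $\mathcal{S}$ and the global minimizers of \eqref{sec4_prob:penalty_model_anm} and \eqref{sec3_prob:origin_atomic_norm} agree --- both of which your proof does establish.
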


As explained in \cite{wu2024quantized}, for constant-modulus constellations (e.g., BPSK or QPSK), the discrete elements of $\mathcal{S}$ are the vertices of $\bar{\mathcal{S}}$ and these vertices maximize the $\ell_{2}$-norm, $\|\mathbf{x}_{\operatorname{dd}, d}\|_{2}^{2}$. By promoting a larger norm through the data vector, the negative square penalty term, i.e., $-\rho \|\mathbf{x}_{\operatorname{dd}, d}\|_{2}^{2}$, encourages the solution of \eqref{sec4_prob:penalty_model_anm} to approach a discrete constellation point as the penalty weight $\rho$ grows. However, it is worth mentioning that the reformulation \eqref{sec4_prob:penalty_model_anm} is not suitable for higher-order QAM constellations, where many constellation points lie inside the convex hull. Since the negative square penalty tends to favor boundary points rather than all valid QAM symbols, the relaxed problem is no longer equivalent to the original QAM-constrained problem. Therefore, the proposed algorithm is designed for PSK-modulated symbols, while extending it to high-order QAM is left for future work.

While penalty methods have been successfully applied to problems like quantized precoding \cite{Shao2019, wu2024quantized, wu2023efficient}, our current formulation \eqref{sec4_prob:penalty_model_anm} introduces distinct complexities. Unlike prior works that optimize a single variable, our objective function features a bilinear coupling between the optimization variables $\mathbf{x}_{\operatorname{dd}, d}$ and $\mathbf{h}$. This coupling creates a neither convex nor concave objective function with potential saddle points, complicating the convergence analysis. Furthermore, the inclusion of the atomic norm regularizer $\|\mathbf{h}\|_{\mathcal{A}}$ precludes simple gradient-based updates for $\mathbf{h}$. Addressing these challenges requires a tailored algorithmic framework capable of handling the biconvex structure and the atomic norm regularization efficiently, which we develop in the next section.

\section{Iterative Coordinate Descent Method for Zak-OTFS Bistatic ISAC}\label{sec4:algorithm}

In this section, we present an efficient iterative algorithm for joint parameter estimation and symbol detection in Zak-OTFS ISAC systems. Then, we provide a theoretical convergence analysis establishing the algorithm's convergence to an $\varepsilon$-stationary point.

\subsection{Our Proposed Algorithm}
In this subsection, we propose an iterative algorithm for solving \eqref{sec4_prob:penalty_model_anm}. To update the channel vector, we employ an inexact accelerated proximal gradient method to solve the atomic norm denoising subproblem. For the symbol update, we utilize an accelerated projected gradient method.

\subsubsection{Initialization}\label{sec4:initialization}
We initialize the algorithm by estimating the channel through \eqref{sec2_channel_estimation_scheme}. By denoting the estimated DD domain channel matrix as $\hat{\mathbf{H}}_{\operatorname{eff}}$, the DD domain transmit symbol vector is detected by LMMSE detector:
\begin{equation}\label{sec2_eq:LMMSE_detector}
        \hat{\mathbf{x}}_{\operatorname{dd}} = \left(\hat{\mathbf{H}}_{\operatorname{eff}}^{\operatorname{H}} \hat{\mathbf{H}}_{\operatorname{eff}} + \sigma^2 \mathbf{I}_{MN}\right)^{-1} \hat{\mathbf{H}}_{\operatorname{eff}}^{\operatorname{H}} \mathbf{y}_{\operatorname{dd}}.
\end{equation}
Then we get the initial channel vector for problem \eqref{sec4_prob:penalty_model_anm} through $\mathbf{h}^{(0)} = \operatorname{vec}\{\mathbf{H}^{(0)}\}$, where $\mathbf{H}^{(0)}$ is given by
\begin{equation}
        \mathbf{H}^{(0)} = \mathbf{F}_{MN} \odot\left(\mathbf{Q} \hat{\mathbf{H}}_{\operatorname{eff}} \mathbf{Q}^{\operatorname{H}} \mathbf{F}_{MN}^{\operatorname{H}}\right), \mathbf{Q} = \mathbf{F}_{MN}^{\operatorname{H}} (\mathbf{F}_{M} \otimes \mathbf{I}_{N}) \widetilde{\mathbf{W}}.\nonumber
\end{equation}
We then obtain the initial time domain signal vector $\mathbf{s}^{(0)}$ through \eqref{sec2_eq:Zak_OTFS_modulation} with $\hat{\mathbf{x}}_{\operatorname{dd}}$.

\subsubsection{Updating the Channel Vector}
Assume the variables have been updated $t$ times. At the $(t+1)$-th iteration, for the simplicity of notations, we reformulate the subproblem for updating $\mathbf{h}^{(t+1)}$ as below,
\begin{equation}\label{sec4_prob:AST_h}
        \min_{\mathbf{h}} \gamma_{t}(\mathbf{h}) + \eta \|\mathbf{h}\|_{\mathcal{A}},
\end{equation}
where $\gamma_{t}(\mathbf{h}) = \frac{1}{2} \|\mathbf{r} - \mathbf{S}^{(t)} \mathbf{h}\|_{2}^{2}$ with $\mathbf{S}^{(t)} = ([\mathbf{s}^{(t)}]^{\operatorname{T}} \otimes \mathbf{I}_{MN}) \mathbf{G}$. Instead of solving problem \eqref{sec4_prob:AST_h} directly, we consider implementing one-step inexact accelerated proximal-gradient method. Specifically, we choose the Nesterov or FISTA-type acceleration scheme. We update the channel vector by $\mathbf{h}^{(t+1)} = \operatorname{prox}_{\alpha^{(t)}}(\hat{\mathbf{h}}^{(t)})$, where
\begin{equation}\label{sec4_prob:AST_h_proximal}
        \operatorname{prox}_{\alpha^{(t)}}(\hat{\mathbf{h}}^{(t)}) = \argmin_{\mathbf{h}} \frac{\alpha^{(t)}}{2} \|\mathbf{h} - \hat{\mathbf{h}}^{(t)}\|_{2}^{2} + \eta \|\mathbf{h}\|_{\mathcal{A}},
\end{equation}
$\alpha^{(t)} = \lambda_{\max}([\mathbf{S}^{(t)}]^{\operatorname{H}}\mathbf{S}^{(t)}) = \|\mathbf{s}^{(t)}\|_{2}^{2}$ is the maximum eigenvalue, $\hat{\mathbf{h}}^{(t)} = \tilde{\mathbf{h}}^{(t)} - \frac{1}{\alpha^{(t)}} \nabla \gamma_{t}(\tilde{\mathbf{h}}^{(t)})$, $\nabla \gamma_{t}(\tilde{\mathbf{h}}^{(t)})$ is the gradient of $\gamma_{t}(\mathbf{h})$ at $\tilde{\mathbf{h}}^{(t)}$ with
\begin{equation}\label{sec4_eq:mid_h_acc_proximal}
        \tilde{\mathbf{h}}^{(t)} = \mathbf{h}^{(t)} + \mu^{(t)} (\mathbf{h}^{(t)} - \mathbf{h}^{(t-1)}),
\end{equation}
and
\begin{equation}
        \mu^{(t)} = \frac{\xi^{(t-1)} - 1}{\xi^{(t)}}, \xi^{(t)} = \frac{1 + \sqrt{4 [\xi^{(t-1)}]^2}}{2},
\end{equation}
and with $\xi^{(-1)} = 0$ and $\mathbf{h}^{(-1)} = \mathbf{h}^{(0)}$.

Solving \eqref{sec4_prob:AST_h} via the one-step inexact accelerated proximal-gradient method offers two key advantages. First, it ensures numerical stability by preventing oscillation within the solution subspace caused by the rank deficiency of $\mathbf{S}^{(t)}$. Second, it enhances computational efficiency by replacing expensive exact optimization on \eqref{sec4_prob:AST_h} with a one-step inexact solution towards \eqref{sec4_prob:AST_h_proximal}, which suffices for overall convergence. While \eqref{sec4_prob:AST_h_proximal} can be solved via an SDP-based ADMM approach, the computational cost is prohibitive. Instead, we leverage the coordinate descent framework from \cite{Coordinate_ANM_Li_2024}, which exploits the origin-symmetry of the atomic set to achieve an efficiency gain of approximately $10^3$ times over state-of-the-art ADMM solver.

Observing that our atomic set $\mathcal{A}$ satisfies this requisite origin-symmetry, we invoke the results from \cite{Chi2020, Coordinate_ANM_Li_2024} to express the atomic norm \eqref{sec3_def:atomic_norm} in the following equivalent form:
\begin{multline}\label{sec3_def:atomic_norm_mix_int}
        \|\mathbf{h}\|_{\mathcal{A}} = \inf_{\substack{(\tau_i, \nu_i) \in \varOmega\\\phi_{i} \in [0, 2\pi)}}\left\{\sum_{i} c_{i} | \mathbf{h} = \sum_{i} c_{i} e^{j \phi} \mathbf{a}(\tau_i, \nu_i),\right.\\
        \left.c_{i}>0, \mathbf{a}(\tau_i, \nu_i) \in \mathcal{A}\right\}.
\end{multline}
With \eqref{sec3_def:atomic_norm_mix_int}, problem \eqref{sec3_prob:origin_atomic_norm} can be reformulated into a mixed-integer problem
\begin{equation}\label{sec4_prob:atomic_mix_int}
        \min_{\substack{c_{i},\phi_{i},            \\\tau_{i}, \nu_{i}, L}} \frac{\alpha^{(t)}}{2}\|\hat{\mathbf{h}}^{(t)} - \sum_{i=1}^{L} c_{i} e^{j \phi_{i}} \mathbf{a}(\tau_i, \nu_i)\|_{2}^{2} + \eta \sum_{i=1}^{L} c_{i}.
\end{equation}
Following the procedure in \cite{Coordinate_ANM_Li_2024}, it can be proved that the two problems \eqref{sec4_prob:AST_h} and \eqref{sec4_prob:atomic_mix_int} are equivalent. Readers can refer to \cite{Coordinate_ANM_Li_2024} for the detailed proof. We then implement the framework proposed in \cite{Coordinate_ANM_Li_2024} for updating the channel vector $\mathbf{h}$ by inexactly solving \eqref{sec4_prob:atomic_mix_int}. We summarize the framework in Algorithm \ref{sec4_alg:AST} for completeness.

\begin{algorithm}[t]
        \caption{Algorithm for inexactly solving \eqref{sec4_prob:atomic_mix_int}}
        \begin{algorithmic}[1]\label{sec4_alg:AST}
                \REQUIRE Gradient step size $\alpha$; Result of gradient step $\hat{\mathbf{h}}$; Atomic set $\mathcal{A}$; Threshold $\eta$; Tolerance $\varepsilon$; Maximum iteration $K_{\max}$.
                \STATE Initialize with empty set of tuples $\mathcal{\mathcal{U}}$; Residual vector $\mathbf{h}_{r} \gets \hat{\mathbf{h}}$; $L \gets 0$; $i \gets 1$; $\delta \gets \varepsilon/ (\alpha \|\hat{\mathbf{h}}\|_{2}^{2}/\eta + \varepsilon)$; $\eta^{\prime} \gets (1-\delta) \eta$;
                \STATE $k \gets 0$;
                \REPEAT
                \IF{$i \leq L$}
                \STATE $(c_{i}, \mathbf{a}_{i}) \gets [\mathcal{U}]_{i}$;
                \STATE $\mathbf{h}_{r}^{i} \gets \mathbf{h}_{r} + c_{i} \mathbf{a}_{i}$;
                \STATE $(c_{i}, \mathbf{a}_{i}) \gets \displaystyle{\argmin_{c \in \mathbb{C}, \mathbf{a} \in \mathcal{A}}}\;\frac{\alpha}{2} \|\mathbf{h}_{r}^{i} - c \mathbf{a}\|_{2}^{2} + \eta^{\prime} |c|$;\label{alg_AST_sub_prob1}
                \STATE $\mathbf{h}_{r} \gets \mathbf{h}_{r}^{i} - c_{i} \mathbf{a}_{i}$;
                \STATE $[\mathcal{U}]_{i} \gets (c_{i}, \mathbf{a}_{i})$;
                \IF{$c_{i} == 0$}
                \STATE Remove $(c_{i}, \mathbf{a}_{i})$ from $\mathcal{U}$;
                \STATE $L \gets L-1$, $i \gets i-1$;
                \ENDIF
                \STATE $i \gets i+1$;
                \ELSIF{$\left|\eta \sum_{j=1}^{L} |c_{j}| - \langle \mathbf{h}_{r}, \hat{\mathbf{h}} - \mathbf{h}_{r} \rangle\right| < \varepsilon$}
                \IF{$\sup_{\mathbf{a} \in \mathcal{A}} \langle \mathbf{h}_{r}, \mathbf{a} \rangle > \eta$}
                \STATE $(c_{L+1}, \mathbf{a}_{L+1}) \gets \displaystyle{\argmin_{c \in \mathbb{C}, \mathbf{a} \in \mathcal{A}}}\; \frac{\alpha}{2} \|\mathbf{h}_{r} - c \mathbf{a}\|_{2}^{2} + \eta^{\prime} |c|$;\label{alg_AST_sub_prob2}
                \STATE $\mathbf{h}_{r} \gets \mathbf{h}_{r} - c_{L+1} \mathbf{a}_{L+1}$;
                \STATE Add $(c_{L+1}, \mathbf{a}_{L+1})$ into $\mathcal{U}$;
                \STATE $L \gets |\mathcal{U}|$, $i \gets 1$;
                \ELSE
                \STATE Break;
                \ENDIF
                \ELSE
                \STATE $i\gets 1$;
                \ENDIF
                \STATE $k \gets k+1$;
                \UNTIL{Meet the stop criterion or $k >= K_{\max}$;}
                \ENSURE $\mathcal{U}$.
        \end{algorithmic}
\end{algorithm}

As discussed in \cite{Bhaskar2013} and the related works performing atomic norm denoising, the regularization parameter on the atomic norm directly balances the error and the sparsity of the solution. Following the derivation procedure in \cite[Proposition 1]{Bhaskar2013}, we set the atomic norm regularization parameter as
\begin{equation}\label{eq:threshold}
        \eta = \mathbb{E}_{\mathbf{n}} \sup_{\mathbf{a} \in \mathcal{A}} \langle \mathbf{n}, \mathbf{S} \mathbf{a} \rangle = \sigma_{n} \|\mathbf{s}\|_{2} \left(\sqrt{\frac{\pi}{2}} + 6 \sqrt{1-\frac{\pi}{4}}\right).
\end{equation}
In Algorithm \ref{sec4_alg:AST}, it is crucial to reliably solve the problems in step \ref{alg_AST_sub_prob1} and step \ref{alg_AST_sub_prob2}. Both problems can be treated as conic projections, but due to the nonconvex atomic set $\mathcal{A}$, these problems are nonconvex. Nevertheless, this projection operation fortunately maintains a separable structure \cite{Coordinate_ANM_Li_2024}. Considering the optimization problem in step \ref{alg_AST_sub_prob2} and denoting $\tilde{\eta} = \eta^{\prime} / \alpha$, we have
\begin{equation}
        \begin{aligned}
                 & \frac{1}{2} \|\mathbf{h}_{r} - c \mathbf{a}\|_{2}^{2} + \tilde{\eta} |c| = \frac{1}{2}\left[\|\mathbf{h}_{r}\|_{2}^{2}\right.                                                                                                                                                \\
                 & \left.+ \left(c \|\mathbf{{a}}\|_{2} + \frac{\tilde{\eta} - \left\langle \mathbf{h}_{r}, \mathbf{a} \right\rangle}{\|\mathbf{a}\|_{2}}\right)^2 - \frac{\left(\tilde{\eta} - \left\langle \mathbf{h}_{r}, \mathbf{a} \right\rangle\right)^2}{\|\mathbf{a}\|_{2}^{2}}\right].
        \end{aligned}
\end{equation}
Since the term $\|\mathbf{a}\|_{2}^{2} = (MN)^2$ is irrelevant to the choice of the atom vector $\mathbf{a}$. The solution to the problem is given as
\begin{subequations}
        \begin{align}
                \mathbf{a}^{\star} & = \argmax_{\mathbf{a} \in \mathcal{A}} \left\langle \mathbf{h}_{r}, \mathbf{a} \right\rangle,                     \label{sec4_eq:update_a}                                                                                                                                                                                                                                              \\
                c^{\star}          & = \begin{cases}
                                               0,                                                                                                                                                                  & \left\langle\mathbf{h}_{r}, \mathbf{a}^{*}\right\rangle \leq \tilde{\eta}, \\
                                               \frac{(\mathbf{h}_{r}^{\operatorname{H}} \mathbf{a}^{\star})^{*}}{(MN)^2} \left(1 - \frac{\tilde{\eta}}{|\mathbf{h}^{\operatorname{H}} \mathbf{a}^{\star}|}\right), & \left\langle\mathbf{h}_{r}, \mathbf{a}^{*}\right\rangle > \tilde{\eta}.
                                       \end{cases}
        \end{align}
\end{subequations}
Determining $\mathbf{a}^{\star}$ is equivalent to maximizing $|\mathbf{h}_{r}^{\operatorname{H}}\mathbf{a}|^2$ with respect to $\tau$ and $\nu$. However, due to the non-convexity of the objective function and the presence of multiple local extrema, finding the global maximum is non-trivial. Following the strategy in \cite{Mamandipoor2016,Coordinate_ANM_Li_2024}, we adopt a two-stage approach: first, we initialize the parameters by selecting the candidate from an oversampled grid that yields the largest objective value; Subsequently, we refine it using the Newton-based gradient ascent method to solve \eqref{sec4_eq:update_a}.

\subsubsection{Updating the Information Symbols Vector}
Once the channel vector $\mathbf{h}^{(t+1)}$ is updated via Algorithm \ref{sec4_alg:AST}, the subproblem for updating the information symbols $\mathbf{x}_{\operatorname{dd}, d}$ in \eqref{sec4_prob:penalty_model_anm} becomes
\begin{equation}\label{sec4_prob:update_channel_vector_original}
        \min_{\mathbf{x} \in \bar{\mathcal{S}}} \phi_{\rho, t}(\mathbf{x}),
\end{equation}
where $\mathbf{x} = \mathbf{x}_{\operatorname{dd}, d}$, $\phi_{\rho, t}(\mathbf{x}) = \frac{1}{2} \|\mathbf{y}_{\operatorname{dd}, d}^{(t+1)} - \tilde{\mathbf{H}}^{(t+1)} \mathbf{x}\|_{2}^{2} - \rho \|\mathbf{x}\|_{2}^{2}$, $\mathbf{y}_{\operatorname{dd}, d}^{(t+1)} = \mathbf{y}_{\operatorname{dd}} - \mathbf{H}_{\operatorname{eff}}^{(t+1)} \mathbf{x}_{\operatorname{dd}, p}$ represents the DD domain received symbols vector after removing the estimated pilot response with $\mathbf{x}_{\operatorname{dd}, p}$ denoting the DD domain pilot vector, $\tilde{\mathbf{H}}^{(t+1)} = \mathbf{H}^{(t+1)}_{\operatorname{eff}, [:, \operatorname{\;data\;indces}]}$ is the submatrix by picking the columns of $\mathbf{H}_{\operatorname{eff}}^{(t+1)}$ that corresponds to the position of data symbols in $\mathbf{x}_{\operatorname{dd}}$, $\mathbf{H}_{\operatorname{eff}}^{(t+1)}$ is the DD domain effective channel matrix constructed by the updated channel vector $\mathbf{h}^{(t+1)}$.

We adopt the majorization-minimization (MM) method to solve \eqref{sec4_prob:update_channel_vector_original}. The update through MM method is to solve the following problem:
\begin{equation}\label{sec4_prob:MM_update_x}
        \min_{\mathbf{x} \in \bar{\mathcal{S}}} \psi_{\rho, t}(\mathbf{x} | \mathbf{x}^{(t)}),
\end{equation}
where $\psi_{\rho, t}(\mathbf{x} | \mathbf{x}^{(t)})$ is the majorant of $\phi_{\rho, t}(\mathbf{x})$ at $\mathbf{x}^{(t)}$. Specifically, we consider the majorant as
\begin{multline}
        \psi_{\rho, t}(\mathbf{x} | \mathbf{x}^{(t)}) = \frac{1}{2} \|\mathbf{y}_{\operatorname{dd}, d}^{(t+1)} - \tilde{\mathbf{H}}^{(t+1)} \mathbf{x}\|_{2}^{2} \\
        - 2 \rho \langle \mathbf{x}^{(t)}, \mathbf{x} - \mathbf{x}^{(t)} \rangle - \rho \|\mathbf{x}^{(t)}\|_{2}^{2},
\end{multline}
It is easy to verify that $\psi_{\rho, t}(\mathbf{x} | \mathbf{x}^{(t)})$ is a tight upper bound of $\phi_{\rho, t}(\mathbf{x})$ since $\phi_{\rho, t}(\mathbf{x}^{(t)}) = \psi_{\rho, t}(\mathbf{x}^{(t)} | \mathbf{x}^{(t)})$ and $\psi_{\rho, t}(\mathbf{x} | \mathbf{x}^{(t)}) \geq \phi_{\rho, t}(\mathbf{x})$ for any $\mathbf{x} \in \bar{\mathcal{S}}$.

Rather than solving \eqref{sec4_prob:update_channel_vector_original} exactly to update $\mathbf{x}$, we perform one-step accelerated projected-gradient to improve the computation efficiency, i.e., we update $\mathbf{x}^{(t+1)}$ by
\begin{equation}\label{sec4_eq:update_x}
        \mathbf{x}^{(t+1)} = \Pi_{\bar{\mathcal{S}}} \left(\tilde{\mathbf{x}}^{(t)} - \frac{1}{\beta^{(t)}} \nabla \psi_\rho(\tilde{\mathbf{x}}^{(t)} | \mathbf{x}^{(t)})\right),
\end{equation}
where $\beta^{(t)} = \lambda_{\max}([\tilde{\mathbf{H}}^{(t+1)}]^{\operatorname{H}} \tilde{\mathbf{H}}^{(t+1)})$,
\begin{equation}\label{sec4_eq:update_tilde_x}
        \tilde{\mathbf{x}}^{(t)} = \mathbf{x}^{(t)} + \iota^{(t)} (\mathbf{x}^{(t)} - \mathbf{x}^{(t-1)}),
\end{equation}
with
\begin{equation}
        \iota^{(t)} = \frac{\zeta^{(t-1)} - 1}{\zeta^{(t)}}, \zeta^{(t)} = \frac{1 + \sqrt{1 + 4 [\zeta^{(t-1)}]^2}}{2},
\end{equation}
and with $\zeta^{(-1)} = 0$ and $\mathbf{x}^{(-1)} = \mathbf{x}^{(0)}$. The projection operation in \eqref{sec4_eq:update_x} has a closed-form expression, see \cite{Shao2019} for details.

Note that obtaining the exact value of $\rho_0$ in Lemma \ref{sec4_lemma:penalty_parameter} is difficult in practice. Instead, we use the homotopy technique \cite{bertsekas2016nonlinear}, which starts with a small penalty parameter and increases it gradually, tracking the solution path of the corresponding penalty problems \cite{wu2024quantized}. This procedure eventually identifies a $\rho_0$ that satisfies Lemma \ref{sec4_lemma:penalty_parameter}. This technique also typically yields much better numerical performance than solving the penalty model \eqref{sec4_prob:penalty_model_anm} directly with an excessively large $\rho$ \cite{Shao2019, wu2024quantized}. The upper bound $\rho_{\rm upb}$ is selected according to the analysis in [22]. In our problem, for an $M_{\rm PSK}$-PSK constellation, we set $\rho_{\rm upb} = \frac{\bar L}{\sin(\pi/M_{\rm PSK})}$ and $\bar L=\max_t L_t$, with $L_t = \|\tilde{\mathbf H}^{(t+1)}\|_2^2\sqrt{N_d} + \|\tilde{\mathbf H}^{(t+1)}\|_2 \|\mathbf{y}_{\rm dd,d}^{(t+1)}\|_2$.

Although the LMMSE or other detectors can be applied to update the symbol detection, it does not explicitly exploit the discrete PSK constellation. In contrast, the negative square penalty incorporates the PSK constraint into a unified optimization framework, promotes solutions toward constellation points, and enables a simple symbol update with tractable convergence analysis.

\begin{algorithm}[t]
        \caption{Algorithm for Bistatic Zak-OTFS ISAC}
        \begin{algorithmic}[1]\label{alg:Bistatic_ISAC}
                \REQUIRE DD domain receive signal $\mathbf{y}_{\operatorname{dd}}$; Atomic set $\mathcal{A}$; Threshold $\eta$; Summable tolerance sequence $\{\varepsilon^{(t)}\}_{t \geq 0}$; Positive integer $n$; Penalty parameter upper bound $\rho_{\operatorname{upb}}$; Maximum iterations $T_{\max}$; Positive constant $\delta_x, \delta_h$.
                \STATE Initialize according to Section \ref{sec4:initialization};
                \STATE $t\gets 0$;
                \REPEAT
                \STATE Inexactly solve \eqref{sec4_prob:penalty_model_anm} via Algorithm \ref{sec4_alg:AST} with tolerance $\varepsilon^{(t)}$, and obtain $\mathbf{h}^{(t+1)}$ with $\mathcal{U}^{(t+1)}$;
                \STATE Reconstruct the DD domain effective channel with $\mathcal{U}^{(t+1)}$ through \eqref{sec2_eq:eff_DD_channel_mtx}, and obtain $\mathbf{H}_{\operatorname{eff}}^{(t+1)}$;
                \STATE Update information symbols sequence $\mathbf{x}_{\operatorname{dd}, d}^{(t+1)}$ via \eqref{sec4_eq:update_x};
                \IF{$\|\mathbf{x}_{\operatorname{dd}, d}^{(t+1)} - \mathbf{x}_{\operatorname{dd}, d}^{(t)}\|_{2}^{2} \leq \delta_x$ or $[t]_{n} = 0$}
                \STATE $\rho^{(t+1)} \gets \min\{c \rho^{(t)}, \rho_{\operatorname{upb}}$\};\label{alg2_step:penalty_update}
                \ENDIF
                \STATE Obtain $\mathbf{s}^{(t+1)}$ via \eqref{sec2_eq:Zak_OTFS_modulation};
                \STATE $t\gets t+1$;
                \UNTIL{$\|\mathbf{h}^{(t+1)} - \mathbf{h}^{(t)}\|_{2} \leq \delta_h$ and $\|\mathbf{x}_{\operatorname{dd}, d}^{(t+1)} - \mathbf{x}_{\operatorname{dd}, d}^{(t)}\|_{2}^{2} \leq \delta_x$ or $t$ reach maximum iterations $T_{\max}$;}
                \ENSURE $(\mathbf{h}^{(t)}, \mathbf{x}_{\operatorname{dd}, d}^{(t)})$.
        \end{algorithmic}
\end{algorithm}

\subsection{Computational Complexity} 

We briefly analyze the computational complexity of the proposed algorithm. Let $Q=MN$ denote the number of DD-domain symbols in one Zak-OTFS frame, and let $N_d$ denote the number of data symbols. The main computational burden comes from the channel update step, where Algorithm \ref{sec4_alg:AST} is used to solve the atomic norm denoising subproblem. In the proposed formulation, each atom is given by $\mathbf a(\tau,\nu)=\mathbf b_{\tau}\otimes \mathbf d_{\nu}$, where $\mathbf b_{\tau},\mathbf d_{\nu}\in\mathbb C^{Q}$. Hence, the atom dimension is $Q^2$. For the considered two-dimensional DD atoms, this projection is implemented by an oversampled two-dimensional FFT search followed by local Newton refinement. Therefore, the per-step complexity is $\mathcal O\left(Q^2\log Q+K_{\rm N}Q^2\right)$, where $K_{\rm N}$ is the number of Newton refinement iterations. Treating $K_{\rm N}$ as a small constant, the per-step complexity becomes $\mathcal O(Q^2\log Q)$. If the final solution contains $L$ active atoms, the coordinate descent solver requires approximately $\mathcal O(L^2)$ iterations to terminate \cite{Coordinate_ANM_Li_2024}. Thus, the complexity of Algorithm \ref{sec4_alg:AST} for one channel update is $\mathcal O\left(L^2Q^2\log Q\right)$. For $T_{\rm out}$ outer iterations, the dominant complexity of the proposed receiver is therefore $\mathcal O\left(T_{\rm out}L^2Q^2\log Q\right).$ The data-symbol update is much cheaper. Given the reconstructed effective channel matrix, it mainly requires matrix-vector multiplications with $\tilde{\mathbf H}\in\mathbb C^{Q\times N_d}$, whose complexity is $\mathcal O(QN_d)$ per update, while the projection onto the convex hull of the PSK constellation is element-wise.

The practical feasibility of the proposed algorithm mainly depends on the sparsity of the DD domain channel. As discussed in \cite{Coordinate_ANM_Li_2024}, the coordinate-descent ANM solver is efficient for sparse AST problems. Therefore, the proposed method is suitable for sparse high-mobility ISAC scenarios with only a few dominant targets or paths. In such cases, \(L\ll Q^2\), and the coordinate-descent solver provides a more scalable alternative to SDP-based ANM while enabling off-grid delay and Doppler estimation. However, when the channel becomes dense or the frame size is very large, the \(L^2\) factor may lead to high computational cost. Hence, the proposed receiver is mainly intended for sparse ISAC applications requiring accurate parameter estimation, rather than for general low-complexity detection.

\subsection{Convergence Analysis}
The procedures for performing Zak-OTFS ISAC is summarized in Algorithm \ref{alg:Bistatic_ISAC}. We now analyze its convergence property in this subsection. For the ease of representation, we simplify the notations here. We rewrite the problem \eqref{sec4_prob:penalty_model_anm} as $\min_{\mathbf{x}, \mathbf{h}} F(\mathbf{x}, \mathbf{h})$, where $F(\mathbf{x}, \mathbf{h})$ is given by
\begin{equation}
        F(\mathbf{x}, \mathbf{h}) = f(\mathbf{x}, \mathbf{h}) + \mathbb{I}_{\bar{\mathcal{S}}}(\mathbf{x}),
\end{equation}
$\mathbf{x} = \mathbf{x}_{\operatorname{dd}, d}$, and $f(\mathbf{x}, \mathbf{h})$ is the objective function in \eqref{sec4_prob:penalty_model_anm}. We claim the following facts to help the convergence analysis.
\begin{enumerate}
        \item $\alpha^{(t)} = \|\mathbf{s}^{(t)}\|_{2}^2$ is bounded by
              \begin{equation}\label{eq:alpha_bound}
                      0 < \munderbar{C}_{s} \leq \alpha^{(t)} \leq \bar{C}_{s} < \infty, \forall t.
              \end{equation}
        \item $\beta^{(t)} = \lambda_{\max}([\tilde{\mathbf{H}}^{(t+1)}]^{\operatorname{H}} \tilde{\mathbf{H}}^{(t+1)})$ is bounded by
              \begin{equation}\label{eq:beta_bound}
                      0 < \munderbar{C}_{h} \leq \beta^{(t)} \leq \bar{C}_{h} < \infty, \forall t.
              \end{equation}
        \item $\{\varepsilon^{(t)}\}_{t \geq 0}$ is summable and non-increasing.
\end{enumerate}
The boundness conditions in \eqref{eq:alpha_bound} and \eqref{eq:beta_bound} follow from the finite and nonzero energies of the transmitted signal and reconstructed channel matrix. With these notations, we now present the convergence result of Algorithm \ref{alg:Bistatic_ISAC} in the following theorem.
\begin{theorem}\label{thm:convergence_result}
        Suppose that $\mu^{(t)}$, $\iota^{(t)}$ in Algorithm \ref{alg:Bistatic_ISAC} satisfy
        \begin{equation}
                0 \leq \mu^{(t)} \leq \bar{\mu}, 0 \leq \iota^{(t)} \leq \bar{\iota}, \forall t,
        \end{equation}
        where $\bar{\mu} = \sqrt{\frac{\munderbar{C}_{s}}{\bar{C}_{s}} (1 - \theta)}$ and $\bar{\iota} = \sqrt{\frac{\munderbar{C}_{h}}{\bar{C}_{h}} (1 - \theta)}$ with $0 < \theta \leq 1$. Then, Algorithm \ref{alg:Bistatic_ISAC} will achieve $\delta$ away to the $\varepsilon$-stationary point within $\mathcal{O}(1 / \delta^2)$ iterations, i.e.,
        \begin{equation}
                \min_{t^{\prime} = 0, 1, \dots, t} \operatorname{dist}\left(\mathbf{0}, \partial_{\varepsilon} F(\mathbf{x}^{(t^{\prime}+1)}, \mathbf{h}^{(t^{\prime}+1)})\right) \leq \frac{C}{\sqrt{t}},
        \end{equation}
        where the explicit expression of the finite number $C$ is given in Appendix \ref{proof:sufficient_descent}.
\end{theorem}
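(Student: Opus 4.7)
The plan is to prove Theorem \ref{thm:convergence_result} by establishing a \textbf{per-iteration sufficient-descent inequality} in both variables, summing over $t$ to bound the accumulated displacement, and then exhibiting an explicit $\varepsilon$-subgradient whose norm is controlled by that displacement. First I would handle the $\mathbf{x}$-update. Since $\psi_{\rho,t}(\cdot|\mathbf{x}^{(t)})$ is a tight majorant of $\phi_{\rho,t}$ whose gradient in $\mathbf{x}$ is $\beta^{(t)}$-Lipschitz, applying the descent lemma at $\tilde{\mathbf{x}}^{(t)}$, combining with the optimality of the projection $\mathbf{x}^{(t+1)}=\Pi_{\bar{\mathcal{S}}}(\cdot)$, and then splitting the resulting cross term $\langle\mathbf{x}^{(t+1)}-\mathbf{x}^{(t)},\,\mathbf{x}^{(t)}-\mathbf{x}^{(t-1)}\rangle$ via Young's inequality yields
\[
F(\mathbf{x}^{(t+1)},\mathbf{h}^{(t)}) \leq F(\mathbf{x}^{(t)},\mathbf{h}^{(t)}) - \tfrac{\munderbar C_h \theta}{2}\|\mathbf{x}^{(t+1)}-\mathbf{x}^{(t)}\|_2^2 + \bar C_h[\iota^{(t)}]^2\|\mathbf{x}^{(t)}-\mathbf{x}^{(t-1)}\|_2^2.
\]
The choice $\bar{\iota}=\sqrt{(\munderbar C_h/\bar C_h)(1-\theta)}$ is exactly what forces the momentum penalty to be strictly smaller than the descent term after telescoping. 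The symmetric argument for $\mathbf{h}$ is slightly more delicate because (i) $\|\cdot\|_{\mathcal{A}}$ survives the descent lemma as a convex nonsmooth term and (ii) Algorithm \ref{sec4_alg:AST} only returns an $\varepsilon^{(t)}$-approximate minimizer of the prox subproblem; using the convexity of $\|\cdot\|_{\mathcal{A}}$ together with an inexact-prox three-point inequality gives an analogous bound with an extra additive $\varepsilon^{(t)}$ on the right.

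Second, I would add the two inequalities and sum from $t=0$ to $T$. Because $F$ is bounded below (the squared residual is nonnegative, the atomic norm is nonnegative, and $-\rho\|\mathbf{x}\|_2^2$ is bounded on the compact convex hull $\bar{\mathcal{S}}$), and because $\{\varepsilon^{(t)}\}$ is summable by assumption, the telescoping of the momentum terms across consecutive iterations leaves
\[
\sum_{t=0}^{T}\bigl(\|\mathbf{x}^{(t+1)}-\mathbf{x}^{(t)}\|_2^2+\|\mathbf{h}^{(t+1)}-\mathbf{h}^{(t)}\|_2^2\bigr) \leq C_0,
\]
where $C_0$ depends on $F(\mathbf{x}^{(0)},\mathbf{h}^{(0)})-\inf F$, $\theta$, $\munderbar C_s$, $\munderbar C_h$, and $\sum_{t}\varepsilon^{(t)}$. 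This $C_0$, together with the Lipschitz constants of $\nabla_{\mathbf{x}}f$ and $\nabla_{\mathbf{h}}\gamma_t$ and with $\bar C_s,\bar C_h$, is what is eventually packaged into the constant $C$ of the theorem.

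Third, to convert this displacement bound into a statement about $\partial_{\varepsilon}F$, I would construct an explicit element of the $\varepsilon$-subdifferential at $(\mathbf{x}^{(t+1)},\mathbf{h}^{(t+1)})$. For the $\mathbf{x}$-block, the projection optimality gives $\boldsymbol{\xi}_x^{(t+1)}\in\partial\mathbb{I}_{\bar{\mathcal{S}}}(\mathbf{x}^{(t+1)})$ with $\nabla_{\mathbf{x}}\psi_{\rho,t}(\tilde{\mathbf{x}}^{(t)}|\mathbf{x}^{(t)})+\beta^{(t)}(\mathbf{x}^{(t+1)}-\tilde{\mathbf{x}}^{(t)})+\boldsymbol{\xi}_x^{(t+1)}=\mathbf{0}$; replacing the majorant gradient by $\nabla_{\mathbf{x}}f(\mathbf{x}^{(t+1)},\mathbf{h}^{(t+1)})$ and the reference point $\tilde{\mathbf{x}}^{(t)}$ by $\mathbf{x}^{(t+1)}$ introduces errors bounded, via Lipschitz continuity, by $\|\mathbf{x}^{(t+1)}-\mathbf{x}^{(t)}\|_2+\bar\iota\|\mathbf{x}^{(t)}-\mathbf{x}^{(t-1)}\|_2+\|\mathbf{h}^{(t+1)}-\mathbf{h}^{(t)}\|_2$. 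The analogous derivation for $\mathbf{h}$ produces an $\varepsilon^{(t)}$-subgradient of $\eta\|\cdot\|_{\mathcal{A}}$ plus a Lipschitz error in $\mathbf{x}$, so altogether
\[
\operatorname{dist}\bigl(\mathbf{0},\partial_{\varepsilon}F(\mathbf{x}^{(t+1)},\mathbf{h}^{(t+1)})\bigr)^2 \leq C_1\bigl(\|\mathbf{x}^{(t+1)}-\mathbf{x}^{(t)}\|_2^2+\|\mathbf{h}^{(t+1)}-\mathbf{h}^{(t)}\|_2^2+\|\mathbf{x}^{(t)}-\mathbf{x}^{(t-1)}\|_2^2+\|\mathbf{h}^{(t)}-\mathbf{h}^{(t-1)}\|_2^2\bigr)+C_2\varepsilon^{(t)}.
\]
Combining with the summability bound and invoking the pigeonhole inequality $\min_{t'\leq t}a_{t'}\leq\frac{1}{t+1}\sum_{t'\leq t}a_{t'}$ delivers the claimed $\mathcal{O}(1/\sqrt{t})$ rate, which inverts to the $\mathcal{O}(1/\delta^2)$ iteration complexity.

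The hard part will be the first step: simultaneously tracking the momentum cross-terms in both coordinates and verifying that the specific thresholds $\bar{\mu}$ and $\bar{\iota}$ in the hypothesis are exactly the ones that make the composite recursion telescope with a strictly positive net descent coefficient $\theta>0$. A secondary subtlety is the inexact atomic-norm prox step: because $\|\cdot\|_{\mathcal{A}}$ is nonsmooth, one cannot deduce an exact subdifferential inclusion at $\mathbf{h}^{(t+1)}$ and must instead work with $\partial_{\varepsilon^{(t)}}\|\cdot\|_{\mathcal{A}}$, which is precisely why the theorem concludes in terms of $\partial_{\varepsilon}F$ rather than $\partial F$ and why summability of $\{\varepsilon^{(t)}\}$ is essential.
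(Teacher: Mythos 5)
Your proposal is correct and follows essentially the same route as the paper's proof: a block sufficient-descent inequality with extrapolation (the paper's Lemma 3 from the block coordinate update literature), an inexact-prox characterization yielding $\varepsilon^{(t)}$-subgradients (the paper's Lemma 2), telescoping under the $\bar{\mu},\bar{\iota}$ thresholds to get a net descent coefficient proportional to $\theta$, and a pigeonhole argument for the $\mathcal{O}(1/\sqrt{t})$ rate. The only minor imprecision is that the inexact prox step contributes not just an additive $\varepsilon^{(t)}$ but also a cross term of order $\sqrt{\varepsilon^{(t)}}\|\mathbf{h}^{(t+1)}-\mathbf{h}^{(t)}\|_2$, which the paper absorbs by completing the square and additionally using that $\sqrt{t\,\varepsilon^{(t)}}$ stays bounded for a summable non-increasing tolerance sequence.
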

\begin{proof}
        See Appendix \ref{proof:sufficient_descent}.
\end{proof}

Although we only proved the algorithm attains an $\varepsilon$-stationary point at a rate of $\mathcal{O}(1 / \sqrt{t})$, the derivation provided in Appendix \ref{proof:sufficient_descent} directly implies a corollary: any limit point of the generated sequence is a stationary point. We omit the proof of this corollary for brevity.

\section{Numerical Results}\label{sec5:numerical_results}

In this section, we evaluate the convergence, communication, and sensing performance of the proposed algorithm. We adopt the Zak-OTFS frame structure in Fig. \ref{sec2_fig:embedded_pilot_illustration} with $M=8$ and $N=16$. The carrier frequency is $24$ GHz, and the subcarrier spacing is $\Delta f=30$ kHz. The pilot is placed at grid $(4,8)$, with the pilot region and guard space given by $[2,6]\times[5,11]$ and $[1,7]\times[4,12]$, respectively. The data symbols have unit magnitude, while the pilot magnitude is set to $\sqrt{63}$ such that the total transmit power is $MN$. We consider a bistatic ISAC scenario with several targets and no line-of-sight (LoS) link between the base station and the receiver. Without loss of generality, the delay of the first target is fixed to be zero, while the delays of the remaining targets are drawn uniformly from $(0,1.5]/(M\Delta f)$ subject to a minimum pairwise separation of $\kappa/(MN\Delta f)$. The Doppler shifts are drawn uniformly from $[-1.5,1.5]/(NT)$ subject to a minimum pairwise separation of $\kappa/(MNT)$. The channel gains follow $\mathcal{CN}(0,1)$\footnote{Although target RCS fluctuations are commonly modeled using Swerling models, we adopt this simplified setting following prevalent ISAC studies. The proposed framework can be extended to fluctuating targets by incorporating the corresponding amplitude statistics into the threshold in \eqref{eq:threshold}.}. The receiver knows only the frame structure and pilot configuration, while the embedded data symbols remain unknown. All results are averaged over $5000$ Monte Carlo trials with independently generated channel parameters and data symbols.

\subsection{Convergence Performance of Proposed Algorithm}

\begin{figure}[t]
        \centering
        \subfloat[Objective function values.]{\includegraphics[width=0.75\columnwidth]{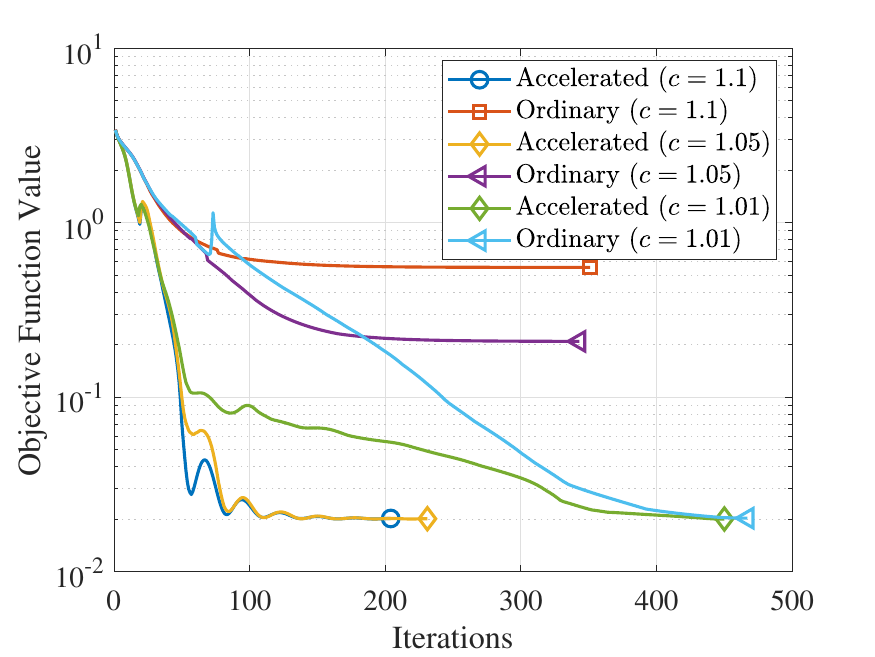}\label{fig:obj_compare}}
        \hfill
        \subfloat[Number of inner iterations.]{\includegraphics[width=0.75\columnwidth]{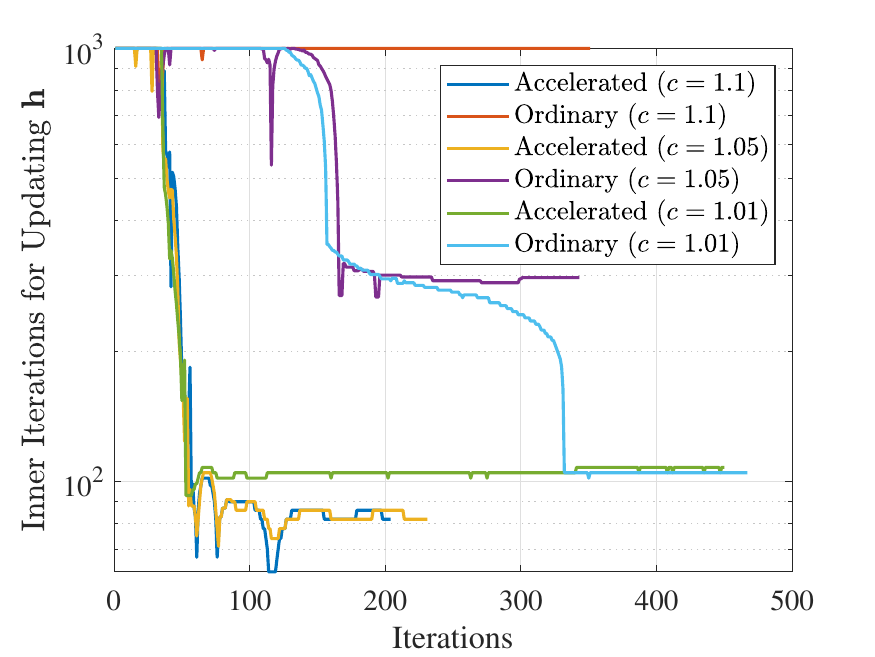}\label{fig:inner_compare}}
        \caption{Iteration performance of the proposed accelerated algorithm and ordinary algorithm.}\label{fig:convergence}
\end{figure}

Fig. \ref{fig:convergence} compares the convergence behavior of the accelerated and ordinary schemes for different penalty-parameter update factors $c$ used in step \ref{alg2_step:penalty_update} of Algorithm \ref{alg:Bistatic_ISAC}. The ordinary scheme is obtained by setting the extrapolation parameters $\mu^{(t)}$ and $\iota^{(t)}$ in \eqref{sec4_eq:mid_h_acc_proximal} and \eqref{sec4_eq:update_tilde_x}, respectively, to zero for all $t$. As shown in Fig. \ref{fig:obj_compare}, the accelerated schemes achieve a substantially faster decrease in the objective value. When $c=1.1$, the accelerated scheme reaches a low objective value within approximately $120$ outer iterations, whereas the ordinary scheme stagnates at a considerably higher level. A similar performance gap is observed for $c=1.05$.

The degradation of the ordinary scheme under relatively large values of $c$ is mainly caused by the slow update of the channel vector. As the penalty parameter increases rapidly, the symbol vector is pushed toward the vertices of the constellation convex hull before the channel estimate becomes sufficiently accurate. The symbols may prematurely converge to incorrect vertices. These symbol detection errors further distort the channel update, causing the algorithm to be trapped at an inferior stationary point with a relatively large objective value. When a smaller update factor is adopted, e.g., $c=1.01$, the symbol vector approaches the constellation vertices more gradually, allowing sufficient refinement of the channel estimate. Therefore, both schemes can eventually attain comparable objective values, although such a small penalty-parameter update factor slows the convergence of the symbol vector.

The number of inner iterations shown in Fig. \ref{fig:inner_compare} further confirms the computational advantage of the accelerated scheme. After a short transient stage, the accelerated schemes rapidly reduce the required number of inner iterations to around $10^2$ and then remain relatively stable for all tested values of $c$. In contrast, the ordinary scheme requires substantially more inner iterations. In particular, when $c=1.1$, the ordinary scheme almost always reaches the maximum number of inner iterations, i.e., $10^3$ in our simulations, indicating severe difficulty in solving the channel-update subproblem. For $c=1.05$ and $c=1.01$, the required inner iterations of the ordinary scheme eventually decrease, but only after a much longer transient stage. These results show that the accelerated scheme not only improves the outer convergence behavior but also reduces the computational burden of each outer iteration, making it less sensitive to the choice of the penalty-parameter update factor.

\subsection{Communication Performance}

Fig. \ref{fig:ber_bpsk} compares the BER performance of the proposed algorithm under BPSK and QPSK modulations. The benchmarks include an LMMSE detector with perfect CSI and a standard LMMSE detector using the channel estimate obtained from \eqref{sec2_channel_estimation_scheme}. The proposed algorithm consistently approaches the perfect-CSI lower bound. The remaining gap from the lower bound is mainly caused by residual channel-estimation errors and becomes slightly more noticeable in the multi-target case because of interference between the pilot responses. Moreover, the BER is lower for $P=2$ than for $P=1$, indicating that the proposed algorithm effectively exploits the DD domain diversity provided by resolvable propagation paths \cite{Li2021}. Besides, BPSK achieves a lower BER than QPSK at the same SNR, since its constellation points are more widely separated and therefore more robust to noise and channel-estimation errors.

\begin{figure}[t]
        \centering
        \includegraphics[width=0.85\columnwidth]{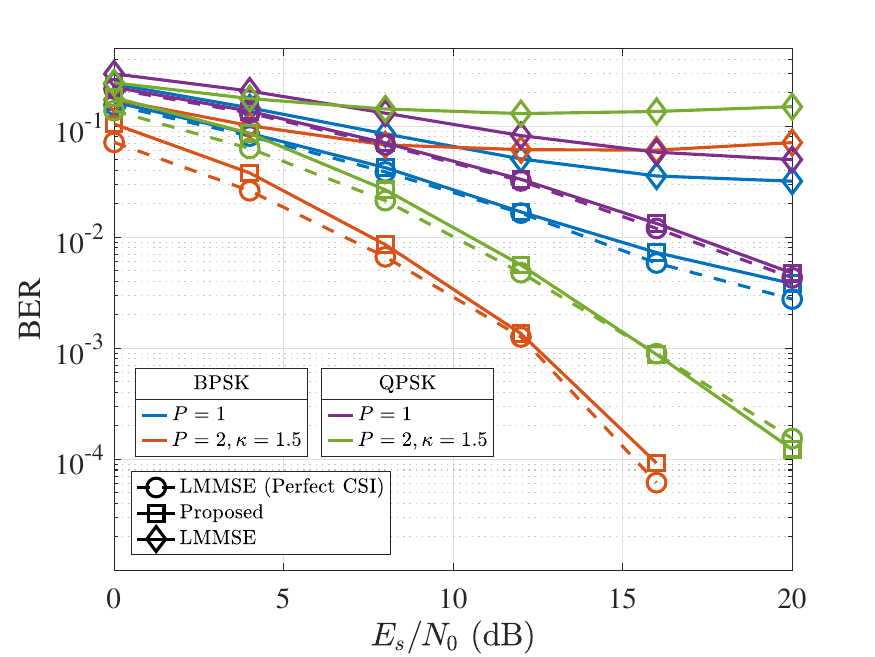}
        \caption{BER performance under different modulations.}\label{fig:ber_bpsk}
\end{figure}

\begin{figure}[t]
        \centering
        \includegraphics[width=0.85\columnwidth]{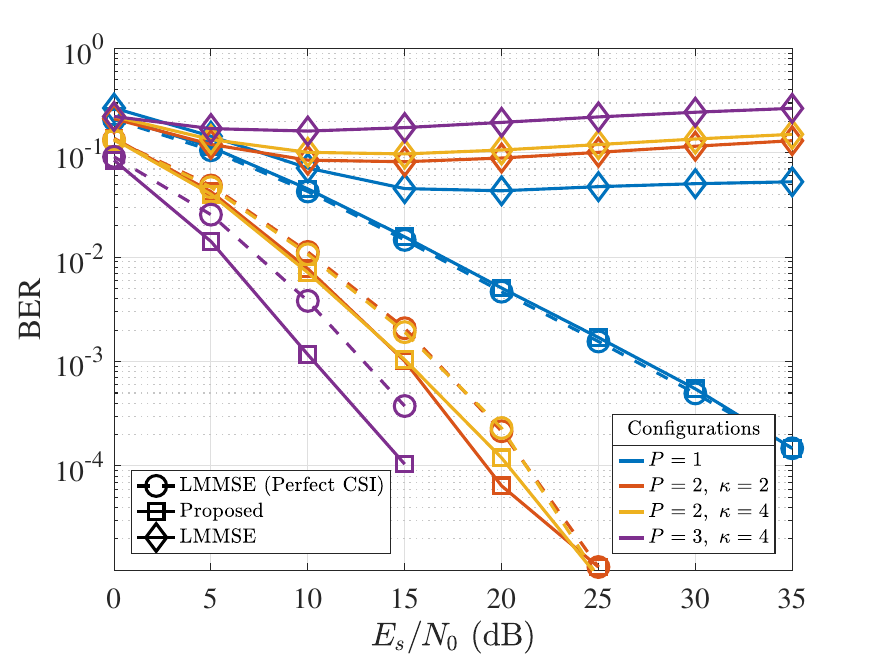}
        \caption{BER performance under QPSK modulation with different settings.}\label{fig:ber_qpsk_settings}
\end{figure}

Fig. \ref{fig:ber_qpsk_settings} compares the BER performance under QPSK modulation for different numbers of targets and minimum target separations. For the standard LMMSE detector, the BER degrades as the number of targets increases. Moreover, for $P=2$, a larger separation parameter, i.e., $\kappa=4$, leads to a slightly higher BER than $\kappa=2$ at high SNR. This is because an increased number of targets or a larger target separation generally broadens the support of the effective channel response in the DD domain. Since the pilot region is of limited size, a larger fraction of the pilot response may fall outside the observation region, resulting in less accurate channel reconstruction and pronounced BER floors for the standard LMMSE detector.

In contrast, the proposed algorithm jointly refines the channel parameters and data symbols and therefore substantially improves the detection performance. The corresponding BERs decrease consistently with the SNR for all tested configurations. In the multi-target cases, the proposed algorithm can even outperform the LMMSE detector with perfect CSI. This does not imply a channel estimate more accurate than perfect CSI; rather, it shows the benefit of negative penalty scheme in symbol detection, which outperforms the linear LMMSE detection. The results again show that the proposed method effectively exploits the DD diversity provided by multiple resolvable propagation paths.

\subsection{Sensing Performance}

The sensing performance is evaluated using the same Monte-Carlo samples as the communication performance evaluation. To evaluate multi-target performance, we perform data association between the ground truths and estimates using the Hungarian algorithm \cite{kuhn1955hungarian} based on the absolute error of the DD parameter pairs. A valid detection is declared only if the matched estimate falls within a predefined distance gate (half of resolution in our simulation). Otherwise, it will be considered as a false alarm. The probability of detection is calculated as the ratio of these valid matches to the total number of ground truth targets, and the false alarm rate is calculated as the ratio of false alarms to the total number of detects.

\begin{figure}[t]
        \centering
        \subfloat[The estimated parameters.]{\includegraphics[width=0.45\columnwidth]{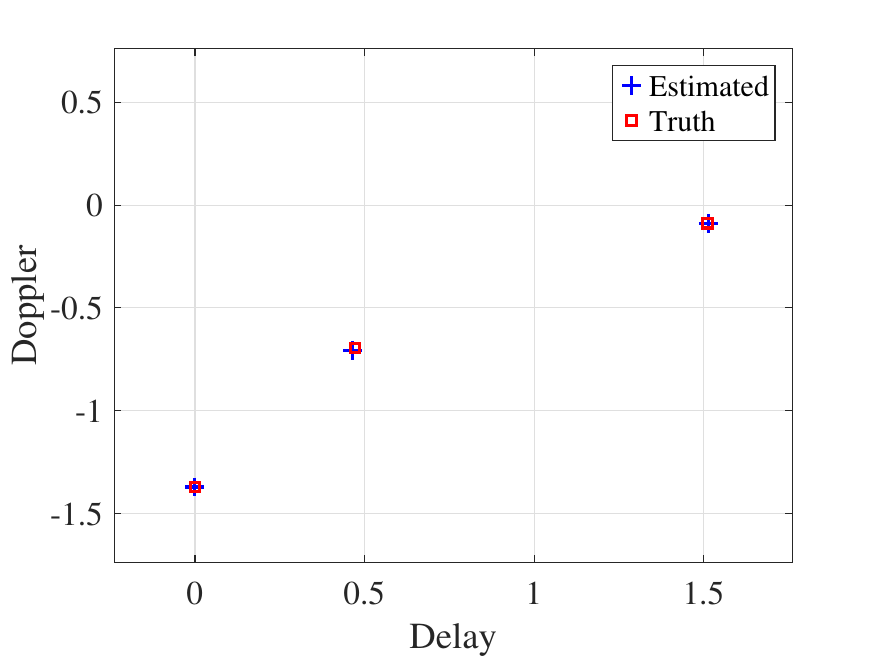}\label{fig:params_example}}
        \subfloat[The DD domain receive signal.]{\includegraphics[width=0.5\columnwidth]{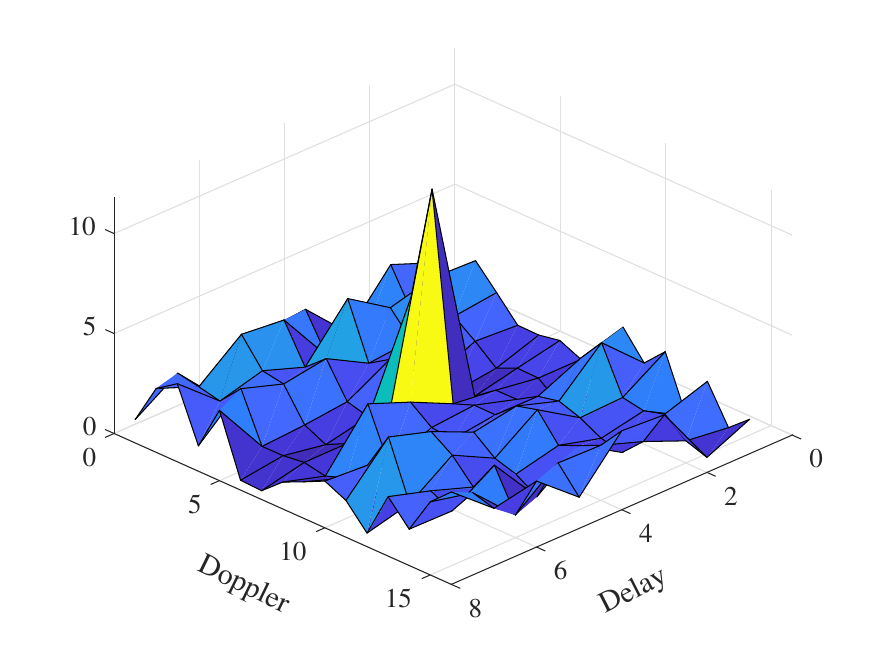}\label{fig:DD_signal}}
        \caption{Illustrations of parameter estimation and the corresponding received DD domain signal.}\label{fig:illustration}
\end{figure}

Fig. \ref{fig:illustration} illustrates a representative example of the proposed parameter-estimation result. Fig. \ref{fig:params_example} compares the estimated delay-Doppler parameters with the ground-truth values, and Fig. \ref{fig:DD_signal} shows the corresponding received DD-domain signal. It can be observed that the proposed method can accurately recover the off-grid delay-Doppler parameters. However, although the proposed ANM-based scheme enables off-grid delay-Doppler estimation, it cannot resolve arbitrarily close targets. Its detection and estimation performance depends on both the number of targets and their minimum delay-Doppler separation; closely spaced targets may be merged or inaccurately estimated due to highly coherent atoms. In the semi-blind setting, symbol-detection errors may further degrade the practical resolution. This limitation is further demonstrated through the simulation results presented later in the paper.

Fig. \ref{fig:detection_qpsk_settings} evaluates the target-detection performance under QPSK modulation. The single-target case achieves an almost perfect detection probability over the entire SNR range. In multi-target scenarios, the detection probability is lower at low SNR because the overlapping DD responses make weak targets more difficult to distinguish. This effect becomes more noticeable when more targets are present or when the target separation is smaller. As the SNR increases, the detection probability approaches one for all configurations, and the false-alarm probability decreases significantly. A larger target separation generally reduces the false-alarm probability by mitigating the mutual interference between target responses. These results demonstrate that the proposed algorithm can reliably detect multiple targets, provided that the received SNR is sufficiently high.

\begin{figure}
        \centering
        \includegraphics[width=0.85\columnwidth]{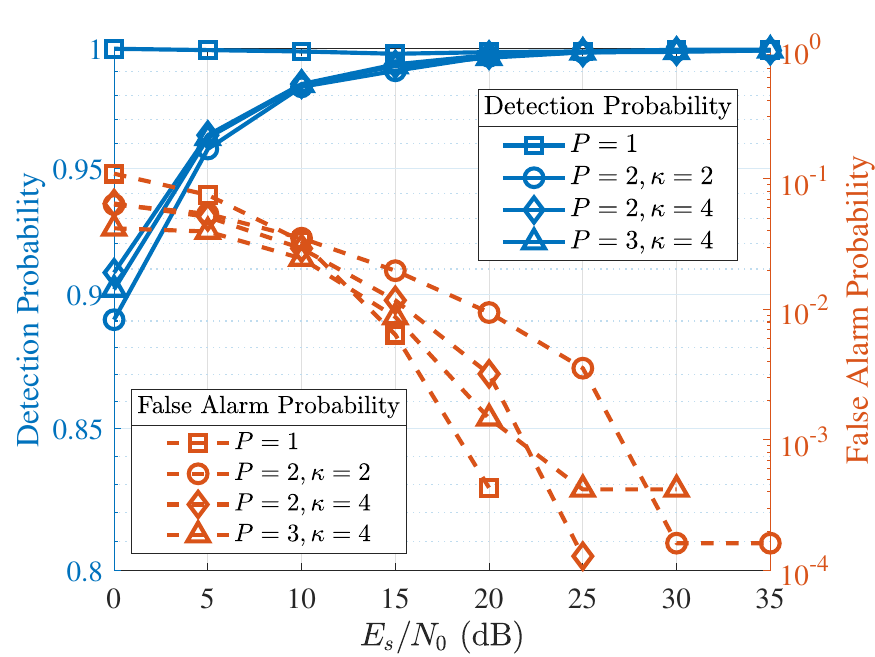}
        \caption{Target detection performance under QPSK modulation with different settings.}\label{fig:detection_qpsk_settings}
\end{figure}

Fig. \ref{fig:sensing_qpsk} evaluates the delay, Doppler, and channel estimation performance under QPSK modulation. We consider a benchmark where the transmitted symbols are perfectly known at the receiver. As expected, the estimation errors decrease with the increasing SNR. The delay and Doppler estimation RMSEs shown in Fig. \ref{fig:sensing_qpsk_range_RMSE} and Fig. \ref{fig:sensing_qpsk_velocity_RMSE} are consistently lower than those obtained in the bistatic ISAC setting, since the unknown data symbols introduce additional uncertainty into the joint estimation problem. This gap is particularly noticeable in the single-target case. As the SNR increases, the two curves gradually approach each other because improved channel estimation leads to more reliable symbol detection, which in turn further refines the channel estimate.

\begin{figure*}
        \centering
        \subfloat[Delay estimation performance.\label{fig:sensing_qpsk_range_RMSE}]{\includegraphics[width=0.33\textwidth]{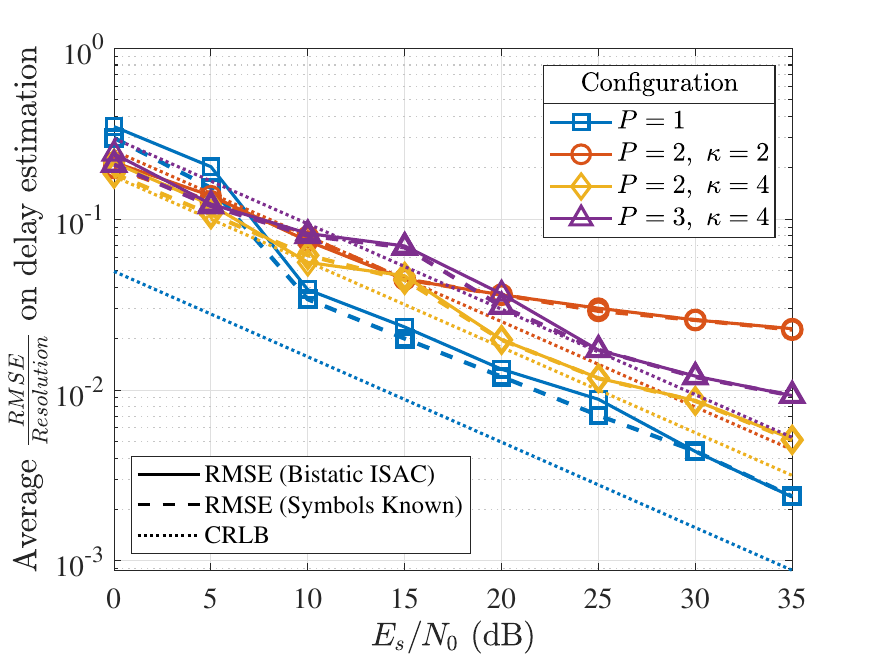}}
        \subfloat[Doppler estimation performance.\label{fig:sensing_qpsk_velocity_RMSE}]{\includegraphics[width=0.33\textwidth]{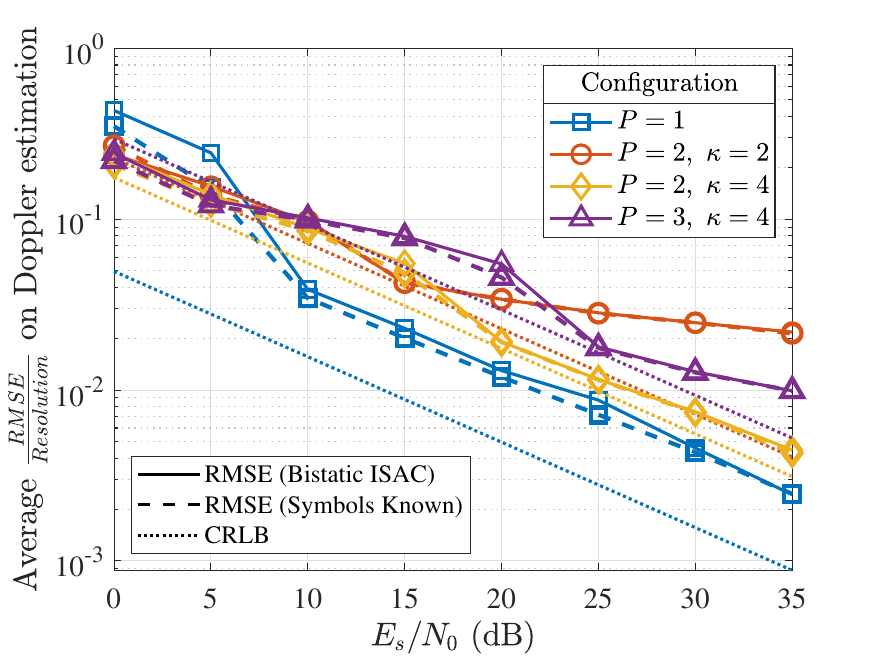}}
        \subfloat[Channel estimation NMSE.\label{fig:sensing_qpsk_ch_est_RMSE}]{\includegraphics[width=0.33\textwidth]{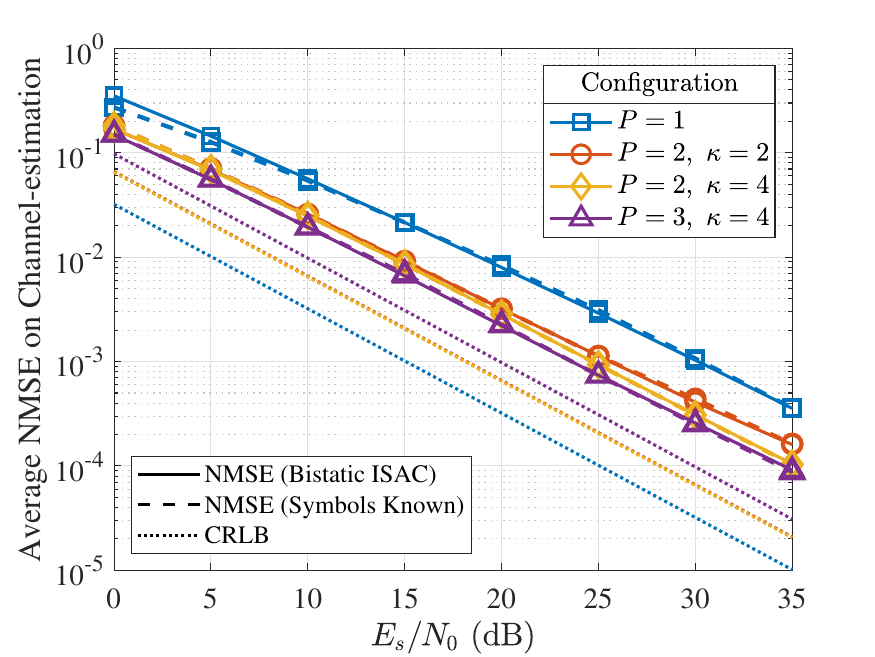}}
        \caption{Sensing performance with QPSK modulation.}\label{fig:sensing_qpsk}
\end{figure*}

In Fig. \ref{fig:sensing_qpsk_ch_est_RMSE}, the channel estimation NMSE of the single-target case exhibits a relatively large gap from the CRLB because residual gain-estimation errors account for a larger fraction of the total channel energy when only one path is present. At high SNR, increasing the number of resolvable targets or enlarging their separation generally improves the NMSE. This trend differs from that of the parameter-estimation performance in Fig. \ref{fig:sensing_qpsk_range_RMSE} and Fig. \ref{fig:sensing_qpsk_velocity_RMSE} because the channel estimation NMSE measures the reconstruction accuracy of the aggregate channel response rather than the accuracy of each individual path parameter. More widely separated DD responses are easier to distinguish, while multiple resolvable paths provide additional diversity and reduce the relative impact of the estimation error after normalization by the total channel energy.

\section{Conclusion}\label{sec6:conclusion}

In this paper, we developed a semi-blind receiver for bistatic Zak-OTFS ISAC, where only the pilot configuration and frame structure are known at the receiver. We first derived a discrete-time matrix I/O relation and showed that the predictability property of Zak-OTFS is preserved, enabling model-free channel estimation for receiver initialization. The joint off-grid channel parameter estimation and PSK symbol detection problem was then formulated as an atomic norm denoising problem. To efficiently solve the resulting problem, we developed an accelerated iterative algorithm combining MM method, inexact accelerated proximal gradient channel updates, and accelerated projected gradient symbol updates. We further established convergence to an $\varepsilon$-stationary point. Simulation results demonstrated accelerated and robust convergence, super-resolution delay-Doppler estimation, reliable multi-target detection, and superior BER performance.

{
\appendices
\section{Proof of Theorem \ref{thm:convergence_result}}\label{proof:sufficient_descent}
The proof proceeds in two steps. First, using the properties of the inexact accelerated proximal-gradient update for the channel vector and the accelerated projected-gradient update for the symbol vector, we derive a sufficient-descent inequality for each iteration. Second, by summing these inequalities and invoking the lower-boundedness of the objective function, we bound the accumulated stationarity residuals and establish sublinear convergence to an $\varepsilon$-stationary point. The following lemma provides a key ingredient for this analysis.

\begin{lemma}[\cite{Schmidt2011}]\label{lm:epsilon_solution_property}
        If $\mathbf{h}^{(t+1)}$ is an $\varepsilon^{(t)}$-optimal solution to the proximal problem \eqref{sec4_prob:AST_h_proximal}, then there exists a vector $\bm{\delta}^{(t)}$ such that $\|\bm{\delta}^{(t)}\|_{2} \leq \sqrt{\frac{2 \varepsilon^{(t)}}{\alpha^{(t)}}}$ and
        \begin{multline}
                \alpha^{(t)} [ \tilde{\mathbf{h}}^{(t)} - \mathbf{h}^{(t+1)} - \frac{1}{\alpha^{(t)}} \nabla  \gamma_{t}(\tilde{\mathbf{h}}^{(t)}) - \bm{\delta}^{(t)} ]\\
                \in \partial_{\varepsilon^{(t)}} \varphi(\mathbf{h}^{(t+1)}),
        \end{multline}
        where $\partial_{\varepsilon^{(t)}} \varphi(\mathbf{h}^{(t+1)})$ is the $\varepsilon^{(t)}$-subgradient of $\varphi(\mathbf{h})$ at $\mathbf{h}^{(t+1)}$ \cite[Section 4.3]{bertsekas2003convex} and $\tilde{\mathbf{h}}^{(t)}$ is defined in \eqref{sec4_eq:mid_h_acc_proximal}.
\end{lemma}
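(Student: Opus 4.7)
The plan is to derive both conclusions of the lemma from two ingredients: (i) the $\alpha^{(t)}$-strong convexity of the proximal objective, which is inherited entirely from the quadratic term since $\varphi(\mathbf{h}):=\eta\|\mathbf{h}\|_\mathcal{A}$ is convex; and (ii) the definition of the $\varepsilon$-subgradient, $\mathbf{g}\in\partial_\varepsilon\varphi(\mathbf{h})\iff \varphi(\mathbf{h}')\geq\varphi(\mathbf{h})+\langle\mathbf{g},\mathbf{h}'-\mathbf{h}\rangle-\varepsilon$ for all $\mathbf{h}'$. Let $P(\mathbf{h})=\tfrac{\alpha^{(t)}}{2}\|\mathbf{h}-\hat{\mathbf{h}}^{(t)}\|_2^2+\varphi(\mathbf{h})$ denote the objective of \eqref{sec4_prob:AST_h_proximal} and $\mathbf{h}^\star$ its exact minimizer, so that the first-order optimality condition gives $\mathbf{g}:=\alpha^{(t)}(\hat{\mathbf{h}}^{(t)}-\mathbf{h}^\star)\in\partial\varphi(\mathbf{h}^\star)$. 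Being $\varepsilon^{(t)}$-optimal means $P(\mathbf{h}^{(t+1)})-P(\mathbf{h}^\star)\leq\varepsilon^{(t)}$.

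First, I would obtain the norm bound. Strong convexity of $P$ with modulus $\alpha^{(t)}$ at the minimizer yields $P(\mathbf{h}^{(t+1)})-P(\mathbf{h}^\star)\geq\tfrac{\alpha^{(t)}}{2}\|\mathbf{h}^{(t+1)}-\mathbf{h}^\star\|_2^2$, so defining $\bm{\delta}^{(t)}:=\mathbf{h}^\star-\mathbf{h}^{(t+1)}$ immediately yields $\|\bm{\delta}^{(t)}\|_2\leq\sqrt{2\varepsilon^{(t)}/\alpha^{(t)}}$. With this choice, $\mathbf{g}=\alpha^{(t)}[\hat{\mathbf{h}}^{(t)}-\mathbf{h}^{(t+1)}-\bm{\delta}^{(t)}]$, which after substituting $\hat{\mathbf{h}}^{(t)}=\tilde{\mathbf{h}}^{(t)}-\tfrac{1}{\alpha^{(t)}}\nabla\gamma_t(\tilde{\mathbf{h}}^{(t)})$ matches the algebraic form stated in the lemma. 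It therefore remains to verify $\mathbf{g}\in\partial_{\varepsilon^{(t)}}\varphi(\mathbf{h}^{(t+1)})$.

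The key step is to derive, for every test point $\mathbf{h}'$, the $\varepsilon$-subgradient inequality at $\mathbf{h}^{(t+1)}$. Starting from the exact subgradient inequality at the minimizer, $\varphi(\mathbf{h}')\geq\varphi(\mathbf{h}^\star)+\langle\mathbf{g},\mathbf{h}'-\mathbf{h}^\star\rangle$, and splitting $\mathbf{h}'-\mathbf{h}^\star=(\mathbf{h}'-\mathbf{h}^{(t+1)})-\bm{\delta}^{(t)}$, the required inclusion reduces to proving the Bregman-type bound
\begin{equation}
R:=\varphi(\mathbf{h}^{(t+1)})-\varphi(\mathbf{h}^\star)-\langle\mathbf{g},\mathbf{h}^{(t+1)}-\mathbf{h}^\star\rangle\;\leq\;\varepsilon^{(t)}.
\end{equation}
To establish this, I would use the elementary identity
\begin{equation}
\tfrac{\alpha^{(t)}}{2}\|\mathbf{h}^{(t+1)}-\hat{\mathbf{h}}^{(t)}\|_2^2-\tfrac{\alpha^{(t)}}{2}\|\mathbf{h}^\star-\hat{\mathbf{h}}^{(t)}\|_2^2=-\langle\mathbf{g},\mathbf{h}^{(t+1)}-\mathbf{h}^\star\rangle+\tfrac{\alpha^{(t)}}{2}\|\mathbf{h}^{(t+1)}-\mathbf{h}^\star\|_2^2,
\end{equation}
which follows from $\mathbf{g}=\alpha^{(t)}(\hat{\mathbf{h}}^{(t)}-\mathbf{h}^\star)$ by completing the square. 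Substituting this into $P(\mathbf{h}^{(t+1)})-P(\mathbf{h}^\star)\leq\varepsilon^{(t)}$ rearranges to $R+\tfrac{\alpha^{(t)}}{2}\|\mathbf{h}^{(t+1)}-\mathbf{h}^\star\|_2^2\leq\varepsilon^{(t)}$. Since $R\geq 0$ by convexity of $\varphi$ and the quadratic term is nonnegative, both summands are individually bounded by $\varepsilon^{(t)}$; the bound on $R$ is precisely what is needed.

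The main obstacle is mostly bookkeeping: getting the signs right in the identification of $\bm{\delta}^{(t)}$ and in the completing-the-square identity so that the two algebraic forms (the inclusion statement and the $\varepsilon$-optimality statement) reconcile. No property of the atomic norm beyond convexity and the existence of a subdifferential at $\mathbf{h}^\star$ is used, so the argument is just the specialization of Schmidt et al.'s inexact-prox framework to our setting with $\varphi=\eta\|\cdot\|_\mathcal{A}$; the strong-convexity constant is $\alpha^{(t)}$ because our prox uses step size $1/\alpha^{(t)}$ rather than the unit step size in the original reference.
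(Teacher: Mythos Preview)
Your proof is correct. Note, however, that the paper does not supply its own proof of this lemma: it is stated with a citation to Schmidt et al.\ \cite{Schmidt2011} and used as a black box in the proof of Theorem~\ref{thm:convergence_result}. Your argument is precisely the standard one from that reference---define $\bm{\delta}^{(t)}$ as the displacement to the exact prox, bound its norm via strong convexity of the proximal objective, and then verify the $\varepsilon$-subgradient inclusion by showing the Bregman remainder $R$ is dominated by $\varepsilon^{(t)}$ through the decomposition $P(\mathbf{h}^{(t+1)})-P(\mathbf{h}^\star)=R+\tfrac{\alpha^{(t)}}{2}\|\mathbf{h}^{(t+1)}-\mathbf{h}^\star\|_2^2$---specialized to $\varphi=\eta\|\cdot\|_{\mathcal{A}}$ and step size $1/\alpha^{(t)}$. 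There is nothing to compare against in the paper itself; you have simply filled in the details the authors chose to omit by citation.
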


\subsection{Upper Boundness of Approximate Stationarity}

The update rule of \eqref{sec4_eq:update_x} is equivalent to
\begin{equation}
        \mathbf{x}^{(t+1)} = \argmin_{\mathbf{x}} \frac{\beta^{(t)}}{2} \|\mathbf{x} -  \hat{\mathbf{x}}^{(t)}\|_{2}^{2} + \mathbb{I}_{\bar{\mathcal{S}}}(\mathbf{x}),
\end{equation}
where $\hat{\mathbf{x}}^{(t)} = \tilde{\mathbf{x}}^{(t)}  - \frac{1}{\beta^{(t)}} \nabla \psi(\tilde{\mathbf{x}}^{(t)} | \mathbf{x}^{(t)})$. According to the first-order optimality, it gives
\begin{equation}
        \mathbf{0} \in \beta^{(t)} (\mathbf{x}^{(t+1)} - \tilde{\mathbf{x}}^{(t)}) + \nabla \psi(\tilde{\mathbf{x}}^{(t)} | \mathbf{x}^{(t)}) + \partial \mathbb{I}_{\bar{\mathcal{S}}}(\mathbf{x}^{(t+1)}).
\end{equation}
Moreover, since $\mathbf{h}^{(t+1)}$ is an $\varepsilon^{(t)}$-optimal solution to the proximal problem \eqref{sec4_prob:AST_h_proximal}, according to Lemma \ref{lm:epsilon_solution_property}, we have
\begin{equation}
        \mathbf{0} \in \alpha^{(t)} (\mathbf{h}^{(t+1)} - \tilde{\mathbf{h}}^{(t)} + \bm{\delta}^{(t)}) + \nabla \gamma_{t}(\tilde{\mathbf{h}}^{(t)}) + \partial_{\varepsilon^{(t)}} \psi(\mathbf{h}^{(t+1)}).
\end{equation}

By denoting $\mathbf{e}_{\mathbf{x}}^{(t+1)} = \nabla \phi_{\rho, t}(\mathbf{x}^{(t+1)}) + \mathbf{v}_{\mathbf{x}}^{(t+1)}$ and $\mathbf{e}_{\mathbf{h}}^{(t+1)} = \nabla \gamma_{t+1}(\mathbf{h}^{(t+1)}) + \mathbf{v}_{\mathbf{h}, \varepsilon^{(t)}}^{(t+1)}$ with $\mathbf{v}_{\mathbf{x}}^{(t+1)} \in \partial \mathbb{I}_{\bar{\mathcal{S}}}(\mathbf{x}^{(t+1)})$ and $\mathbf{v}_{\mathbf{h}, \varepsilon^{(t)}}^{(t+1)} \in \partial_{\varepsilon^{(t)}} \psi(\mathbf{h}^{(t+1)})$ being such that
\begin{equation}
        \begin{aligned}
                \mathbf{0} & = \beta^{(t)} (\mathbf{x}^{(t+1)} - \tilde{\mathbf{x}}^{(t)}) + \nabla \psi(\tilde{\mathbf{x}}^{(t)} | \mathbf{x}^{(t)}) + \mathbf{v}_{\mathbf{x}}^{(t+1)},                            \\
                \mathbf{0} & = \alpha^{(t)} (\mathbf{h}^{(t+1)} - \tilde{\mathbf{h}}^{(t)} + \bm{\delta}^{(t)}) + \nabla \gamma_{t}(\tilde{\mathbf{h}}^{(t)}) + \mathbf{v}_{\mathbf{h}, \varepsilon^{(t)}}^{(t+1)},
        \end{aligned}
\end{equation}
we then have
\begin{equation}
        \begin{aligned}
                \mathbf{e}_{\mathbf{x}}^{(t+1)} & \in \partial_{\mathbf{x}} F(\mathbf{x}^{(t+1)}, \mathbf{h}^{(t+1)}),                    \\
                \mathbf{e}_{\mathbf{h}}^{(t+1)} & \in \partial_{\mathbf{h}, \varepsilon^{(t)}} F(\mathbf{x}^{(t+1)}, \mathbf{h}^{(t+1)}),
        \end{aligned}
\end{equation}
where $\partial_{\mathbf{x}} F(\mathbf{x}^{(t+1)}, \mathbf{h}^{(t+1)})$ is the subgradient of $F(\mathbf{x}, \mathbf{h}^{(t+1)})$ towards $\mathbf{x}^{(t+1)}$ and $\partial_{\mathbf{h}, \varepsilon^{(t)}} F(\mathbf{x}^{(t+1)}, \mathbf{h}^{(t+1)})$ represents the $\varepsilon^{(t)}$-subgradient of $F(\mathbf{x}^{(t+1)}, \mathbf{h})$ at $\mathbf{h}^{(t+1)}$. Recap the properties about $\varepsilon$-subgradient in \cite{bertsekas2003convex},
\begin{equation}
        \begin{aligned}
                \cap & \partial_{\varepsilon \downarrow 0} \psi(\mathbf{h}) = \partial \psi(\mathbf{h}),                                                                 \\
                     & \partial_{\varepsilon^{(1)}}(\mathbf{h}) \subset \partial_{\varepsilon^{(2)}}(\mathbf{h}), \text{ if } 0 < \varepsilon^{(1)} < \varepsilon^{(2)},
        \end{aligned}\nonumber
\end{equation}
we then have
\begin{equation}
        \partial_{\mathbf{x}} F(\mathbf{x}^{(t+1)}, \mathbf{h}^{(t+1)}) \subset \partial_{\mathbf{x}, \varepsilon^{(t)}} F(\mathbf{x}^{(t+1)}, \mathbf{h}^{(t+1)}),
\end{equation}
where $\partial_{\mathbf{x}, \varepsilon^{(t)}} F(\mathbf{x}^{(t+1)}, \mathbf{h}^{(t+1)})$ is the corresponding $\varepsilon^{(t)}$-subgradient. Now, we get
\begin{equation}\label{apdx_ineq:e_x_upperbound}
        \begin{aligned}
                &\|\mathbf{e}_{\mathbf{x}}^{(t+1)}\|_{2}\\
         = &\|\nabla \phi_{\rho, t}(\mathbf{x}^{(t+1)}) - \nabla \psi_{\rho, t}(\tilde{\mathbf{x}}^{(t)} | \mathbf{x}^{(t)}) - \beta^{(t)} (\mathbf{x}^{(t+1)} - \tilde{\mathbf{x}}^{(t)})\|_{2} \\
         \leq &\|\nabla \phi_{\rho, t}(\mathbf{x}^{(t+1)}) - \nabla \psi_{\rho, t}(\tilde{\mathbf{x}}^{(t)} | \mathbf{x}^{(t)})\|_{2} + \beta^{(t)} \| \mathbf{x}^{(t+1)} - \tilde{\mathbf{x}}^{(t)}\|_{2}.
        \end{aligned}
\end{equation}
The first term of \eqref{apdx_ineq:e_x_upperbound} can be upper bounded by
\begin{multline}\label{apdx_ineq:1st_e_x_upperbound}
        \|\nabla \phi_{\rho, t}(\mathbf{x}^{(t+1)}) - \nabla \psi_{\rho, t}(\tilde{\mathbf{x}}^{(t)} | \mathbf{x}^{(t)})\|_{2}\\
        \leq \|\nabla \phi_{\rho, t}(\mathbf{x}^{(t+1)}) - \nabla \phi_{\rho, t}(\mathbf{x}^{(t)})\|_{2} \\
        + \|\nabla \psi_{\rho, t}(\mathbf{x}^{(t)} | \mathbf{x}^{(t)}) - \nabla \psi_{\rho, t}(\tilde{\mathbf{x}}^{(t)} | \mathbf{x}^{(t)})\|_{2}\\
        \leq (\sqrt{\beta^{(t)}} + 2 \rho) \|\mathbf{x}^{(t+1)} - \mathbf{x}^{(t)}\|_{2} + \sqrt{\beta^{(t)}} \iota^{(t)} \|\mathbf{x}^{(t)} - \mathbf{x}^{(t-1)}\|_{2}.
\end{multline}
The second term of \eqref{apdx_ineq:e_x_upperbound} can be upper bounded by
\begin{multline}\label{apdx_ineq:2nd_e_x_upperbound}
        \beta^{(t)} \| \mathbf{x}^{(t+1)} - \tilde{\mathbf{x}}^{(t)}\|_{2}\\
        \leq \beta^{(t)} \|\mathbf{x}^{(t+1)} - \mathbf{x}^{(t)}\|_{2} + \beta^{(t)} \iota^{(t)} \|\mathbf{x}^{(t)} - \mathbf{x}^{(t-1)}\|_{2}.
\end{multline}
Combining the results in \eqref{apdx_ineq:1st_e_x_upperbound} and \eqref{apdx_ineq:2nd_e_x_upperbound}, we have
\begin{equation}
        \|\mathbf{e}_{\mathbf{x}}^{(t+1)}\|_{2} \leq C_{x}^{(t)} (\|\mathbf{x}^{(t+1)} - \mathbf{x}^{(t)}\|_{2} + \|\mathbf{x}^{(t)} - \mathbf{x}^{(t-1)}\|_{2}),
\end{equation}
where $C_{x}^{(t)} = 2 \rho + (1 + \iota^{(t)}) (\beta^{(t)} + \sqrt{\beta^{(t)}})$ is finite for all $t$.

Similarly, we can bound $\|\mathbf{e}_{\mathbf{h}}^{(t+1)}\|_{2}$ by
\begin{multline}\label{aped_ineq:e_h_upperbound}
        \|\mathbf{e}_{\mathbf{h}}^{(t+1)}\|_{2} \\
        = \|\nabla \gamma_{t+1}(\mathbf{h}^{(t+1)}) - \alpha^{(t)} (\mathbf{h}^{(t+1)} - \tilde{\mathbf{h}}^{(t)} + \bm{\delta}^{(t)}) - \nabla \gamma_{t}(\tilde{\mathbf{h}}^{(t)})\|_{2} \\
        \leq \|\nabla \gamma_{t+1}(\mathbf{h}^{(t+1)}) - \nabla \gamma_{t}(\tilde{\mathbf{h}}^{(t)})\|_{2} + \alpha^{(t)} \|\mathbf{h}^{(t+1)} - \tilde{\mathbf{h}}^{(t)}\|_{2} \\
        + \sqrt{2 \alpha^{(t)} \varepsilon^{(t)}}.
\end{multline}
The first term of \eqref{aped_ineq:e_h_upperbound} can be upper bounded by
\begin{equation}
        \hspace{-0.75em}\begin{aligned}
              &\|\nabla \gamma_{t+1}(\mathbf{h}^{(t+1)}) - \nabla \gamma_{t}(\tilde{\mathbf{h}}^{(t)})\|_{2}\\
        \leq &(\|(\mathbf{S}^{(t+1)} + \mathbf{S}^{(t)})^{\operatorname{H}} \mathbf{h}^{(t+1)}\|_{2} + \|\mathbf{r}\|_{2}) \|\mathbf{S}^{(t+1)} - \mathbf{S}^{(t)}\|_{2}\\
        + &\sqrt{\alpha^{(t)}} \|\mathbf{h}^{(t+1)} - \mathbf{h}^{(t)}\|_{2} + \mu^{(t)} \sqrt{\alpha^{(t)}} \|\mathbf{h}^{(t)} - \mathbf{h}^{(t-1)}\|_{2}.  
        \end{aligned}
\end{equation}
The second term of \eqref{aped_ineq:e_h_upperbound} can be upper bounded by
\begin{multline}
        \alpha^{(t)} \|\mathbf{h}^{(t+1)} - \tilde{\mathbf{h}}^{(t)}\|_{2}\\
        \leq \alpha^{(t)} \|\mathbf{h}^{(t+1)} - \mathbf{h}^{(t)}\|_{2} + \alpha^{(t)} \mu^{(t)} \|\mathbf{h}^{(t)} - \mathbf{h}^{(t-1)}\|_{2}.
\end{multline}
Since $\|\mathbf{S}^{(t+1)} - \mathbf{S}^{(t)}\|_{2} = \|\mathbf{x}^{(t+1)} - \mathbf{x}^{(t)}\|_{2}$ according to \eqref{sec2_eq:Zak_OTFS_modulation}, we get
\begin{multline}
        \|\mathbf{e}_{\mathbf{h}}^{(t+1)}\|_{2} \leq C_{h}^{(t)} (\| \mathbf{h}^{(t+1)} - \mathbf{h}^{(t)} \|_{2} + \|\mathbf{h}^{(t)} - \mathbf{h}^{(t-1)}\|_{2}) \\
        + C_{s}^{(t)} \|\mathbf{x}^{(t+1)} - \mathbf{x}^{(t)}\|_{2} + \sqrt{2 \alpha^{(t)} \varepsilon^{(t)}},
\end{multline}
where both $C_{h}^{(t)} = (\alpha^{(t)} + \sqrt{\alpha^{(t)}}) (1 + \mu^{(t)})$ and $C_{s}^{(t)} = \|(\mathbf{S}^{(t+1)} + \mathbf{S}^{(t)})^{\operatorname{H}} \mathbf{h}^{(t+1)}\|_{2} + \|\mathbf{r}\|_{2}$ are finite.

Therefore, the distance between $\mathbf{0}$ and the $\varepsilon^{(t)}$-subgradient of $F(\mathbf{x}, \mathbf{h})$ at $(\mathbf{x}^{(t+1)}, \mathbf{h}^{(t+1)})$ is upper bounded by
\begin{multline}\label{apdx_ineq:epsilon_subgradient_upperbound}
        \operatorname{dist}(\mathbf{0}, \partial_{\varepsilon^{(t)}} F(\mathbf{x}^{(t+1)}, \mathbf{h}^{(t+1)})) \\
        \leq \bar{C} (\|\mathbf{x}^{(t+1)} - \mathbf{x}^{(t)}\|_{2} + \|\mathbf{x}^{(t)} - \mathbf{x}^{(t-1)}\|_{2} + \| \mathbf{h}^{(t+1)} - \mathbf{h}^{(t)} \|_{2} \\
        + \|\mathbf{h}^{(t)} - \mathbf{h}^{(t-1)}\|_{2}) + \sqrt{2 \alpha^{(t)} \varepsilon^{(t)}},
\end{multline}
where $\bar{C} = \sup_{t} \{C_{h}^{(t)}, C_{s}^{(t)}, C_{x}^{(t)}\}$ is finite.

\subsection{Upper Boundness of Objective Function Descents}

\subsubsection{Upper Boundness of $f(\mathbf{x}^{(t)}, \mathbf{h}^{(t+1)}) - f(\mathbf{x}^{(t)}, \mathbf{h}^{(t)})$}

Since Algorithm \ref{sec4_alg:AST} solves \eqref{sec4_prob:atomic_mix_int} into $\varepsilon^{(t)}$-optimally, we then have
\begin{equation}\label{apdx_ineq:epsilon_optimal}
        f(\mathbf{x}^{(t)}, \mathbf{h}^{(t+1)}) \leq \varepsilon^{(t)} + \inf_{\mathbf{h}} \left\{ f(\mathbf{x}^{(t)}, \mathbf{h}) \right\}.
\end{equation}
According to the result in \cite{Coordinate_ANM_Li_2024}, Algorithm \ref{sec4_alg:AST} arrives \eqref{apdx_ineq:epsilon_optimal} within finite iterations. Since $\gamma_{t}(\mathbf{h})$ in \eqref{sec4_prob:AST_h} is convex and $\alpha^{(t)}$-Lipschitz, $\gamma_{t}(\mathbf{h}^{(t+1)})$ can be bounded as follows:
\begin{multline}\label{apdx_ineq:upperbound_gamma}
        \gamma_{t}(\mathbf{h}^{(t+1)}) \\
        \leq \gamma_{t}(\tilde{\mathbf{h}}^{(t)}) + \langle \nabla \gamma_{t}(\tilde{\mathbf{h}}^{(t)}), \mathbf{h}^{(t+1)} - \tilde{\mathbf{h}}^{(t)} \rangle + \frac{\alpha^{(t)}}{2} \|\mathbf{h}^{(t+1)} - \tilde{\mathbf{h}}^{(t)}\|_{2}^{2}\\
        \leq \gamma_{t}(\mathbf{h}^{(t)}) + \langle \nabla \gamma_{t}(\tilde{\mathbf{h}}^{(t)}), \tilde{\mathbf{h}}^{(t)} - \mathbf{h}^{(t)} \rangle + \langle \nabla \gamma_{t}(\tilde{\mathbf{h}}^{(t)}), \mathbf{h}^{(t+1)} - \tilde{\mathbf{h}}^{(t)} \rangle\\
         + \frac{\alpha^{(t)}}{2} \|\mathbf{h}^{(t+1)} - \tilde{\mathbf{h}}^{(t)}\|_{2}^{2}.
\end{multline}
Since $\mathbf{h}^{(t+1)}$ is an $\varepsilon^{(t)}$-optimal solution of \eqref{sec4_prob:AST_h_proximal}, according to Lemma \ref{lm:epsilon_solution_property}, we have
\begin{multline}\label{apdx_ineq:upperbound_varphi}
        \varphi(\mathbf{h}^{(t+1)}) \leq \varphi(\mathbf{h}^{(t)}) - \langle \nabla \gamma_{t}(\tilde{\mathbf{h}}^{(t)})\\
        + \alpha^{(t)} (\bm{\delta}^{(t)} + \mathbf{h}^{(t+1)} - \tilde{\mathbf{h}}^{(t)}),  \mathbf{h}^{(t+1)} - \mathbf{h}^{(t)}\rangle + \varepsilon^{(t)}.
\end{multline}
By adding \eqref{apdx_ineq:upperbound_gamma} and \eqref{apdx_ineq:upperbound_varphi} together and substituting $\tilde{\mathbf{h}}^{(t)}$ in \eqref{sec4_eq:mid_h_acc_proximal} into the result, after some calculations, it gives:
\begin{multline}\label{apdx_ineq:upperbound_h_update}
        f(\mathbf{x}^{(t)}, \mathbf{h}^{(t+1)}) - f(\mathbf{x}^{(t)}, \mathbf{h}^{(t)}) \leq \sqrt{2 \alpha^{(t)} \varepsilon^{(t)}} \|\mathbf{h}^{(t+1)} - \mathbf{h}^{(t)}\|_{2} \\
        - \frac{\alpha^{(t)}}{2} \left(\|\mathbf{h}^{(t+1)} - \mathbf{h}^{(t)}\|_{2}^{2} - \bar{\mu}^2\|\mathbf{h}^{(t)} - \mathbf{h}^{(t-1)}\|_{2}^{2}\right) + \varepsilon^{(t)},
\end{multline}
where $\|\bm{\delta}^{(t)}\|_{2} \leq \sqrt{\frac{2 \varepsilon^{(t)}}{\alpha^{(t)}}}$ is implemented with Cauchy-Schwartz inequality to further upperbound the result.

\subsubsection{Upper Boundness of $f(\mathbf{x}^{(t+1)}, \mathbf{h}^{(t+1)}) - f(\mathbf{x}^{(t)}, \mathbf{h}^{(t+1)})$}
The following lemma is useful to derive the upper boundness.
\begin{lemma}[\cite{xu2013block}]\label{apdx_lm:3}
        Let
        \begin{equation}
                \mathbf{x}^{+} = \Pi_{\mathcal{X}}\left(\mathbf{z} - \frac{1}{\beta} \nabla H(\mathbf{z})\right),
        \end{equation}
        where $\mathbf{z} = \mathbf{x} + \alpha (\mathbf{x} - \bar{\mathbf{x}})$ with $\mathbf{x}$, $\bar{\mathbf{x}} \in \mathcal{X}$, $\alpha \geq 0$; $H$ is convex and has Lipschitz continuous gradient; $\mathcal{X}$ is convex; $\beta$ is chosen to satisfy
        \begin{equation}
                H(\mathbf{x}^{+}) \leq H(\mathbf{z}) + \langle \nabla H(\mathbf{z}), \mathbf{x}^{+} - \mathbf{z} \rangle + \frac{\beta}{2} \|\mathbf{x}^{+} - \mathbf{z}\|_{2}^{2}.
        \end{equation}
        Then, it holds that
        \begin{equation}
                H(\mathbf{x}) - H(\mathbf{x}^{+}) \geq \frac{\beta}{2} (\|\mathbf{x}^{+} - \mathbf{x}\|_{2}^{2} - \alpha^2 \|\mathbf{x} - \bar{\mathbf{x}}\|_{2}^{2}).
        \end{equation}
\end{lemma}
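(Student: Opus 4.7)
The plan is to derive the stated descent inequality by combining the sufficient-descent property of the backtracking step (which is the hypothesis on $\beta$) with the strong-convexity of the quadratic model being minimized by the projection step. Since the projection $\mathbf{x}^{+}$ is the minimizer over $\mathcal{X}$ of the quadratic surrogate $Q(\mathbf{x}') := \langle \nabla H(\mathbf{z}), \mathbf{x}' - \mathbf{z}\rangle + \frac{\beta}{2}\|\mathbf{x}' - \mathbf{z}\|_{2}^{2}$, and $Q$ is $\beta$-strongly convex, I get the three-point inequality
\begin{equation*}
Q(\mathbf{x}^{+}) \leq Q(\mathbf{y}) - \frac{\beta}{2}\|\mathbf{y} - \mathbf{x}^{+}\|_{2}^{2}, \qquad \forall\, \mathbf{y} \in \mathcal{X}.
\end{equation*}

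First I would specialize this inequality at $\mathbf{y} = \mathbf{x}$ (which lies in $\mathcal{X}$ by hypothesis), obtaining an upper bound on $\langle \nabla H(\mathbf{z}), \mathbf{x}^{+} - \mathbf{z}\rangle + \frac{\beta}{2}\|\mathbf{x}^{+} - \mathbf{z}\|_{2}^{2}$ in terms of $\langle \nabla H(\mathbf{z}), \mathbf{x} - \mathbf{z}\rangle + \frac{\beta}{2}\|\mathbf{x} - \mathbf{z}\|_{2}^{2} - \frac{\beta}{2}\|\mathbf{x} - \mathbf{x}^{+}\|_{2}^{2}$. Next I would chain this with the backtracking-type hypothesis $H(\mathbf{x}^{+}) \leq H(\mathbf{z}) + \langle \nabla H(\mathbf{z}), \mathbf{x}^{+} - \mathbf{z}\rangle + \frac{\beta}{2}\|\mathbf{x}^{+} - \mathbf{z}\|_{2}^{2}$, which yields
\begin{equation*}
H(\mathbf{x}^{+}) \leq H(\mathbf{z}) + \langle \nabla H(\mathbf{z}), \mathbf{x} - \mathbf{z}\rangle + \frac{\beta}{2}\|\mathbf{x} - \mathbf{z}\|_{2}^{2} - \frac{\beta}{2}\|\mathbf{x} - \mathbf{x}^{+}\|_{2}^{2}.
\end{equation*}

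Then I would invoke convexity of $H$ in the form $H(\mathbf{z}) + \langle \nabla H(\mathbf{z}), \mathbf{x} - \mathbf{z}\rangle \leq H(\mathbf{x})$ to replace the first two terms on the right by $H(\mathbf{x})$. Finally, I would use the extrapolation identity $\mathbf{z} - \mathbf{x} = \alpha(\mathbf{x} - \bar{\mathbf{x}})$, which gives $\|\mathbf{x} - \mathbf{z}\|_{2}^{2} = \alpha^{2}\|\mathbf{x} - \bar{\mathbf{x}}\|_{2}^{2}$, and rearrange to obtain the claimed bound $H(\mathbf{x}) - H(\mathbf{x}^{+}) \geq \frac{\beta}{2}\bigl(\|\mathbf{x}^{+} - \mathbf{x}\|_{2}^{2} - \alpha^{2}\|\mathbf{x} - \bar{\mathbf{x}}\|_{2}^{2}\bigr)$.

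The proof is essentially mechanical once the right three-point inequality is invoked, so there is no genuine obstacle; the only place that requires a small amount of care is ensuring the hypotheses line up — specifically that $\mathbf{x} \in \mathcal{X}$ (needed to plug $\mathbf{x}$ into the minimizer's optimality inequality) and that the backtracking hypothesis on $\beta$ is evaluated at $\mathbf{z}$ rather than at $\mathbf{x}$, so that the $\alpha^{2}\|\mathbf{x} - \bar{\mathbf{x}}\|_{2}^{2}$ term arises naturally from the extrapolation rather than being manually introduced. No Lipschitz constant of $\nabla H$ is used explicitly beyond what is absorbed into the backtracking hypothesis, which keeps the argument self-contained.
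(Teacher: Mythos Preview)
Your proposal is correct and follows the standard route for this descent lemma: the three-point inequality from strong convexity of the quadratic model, chained with the backtracking hypothesis and convexity of $H$, then the extrapolation identity $\mathbf{z}-\mathbf{x}=\alpha(\mathbf{x}-\bar{\mathbf{x}})$. The paper does not supply its own proof of this lemma but cites it from \cite{xu2013block}; your argument is precisely the one given there, so there is nothing to compare.
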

According to the update rule \eqref{sec4_eq:update_x} and Lemma \ref{apdx_lm:3}, we have
\begin{equation}\label{apdx_ineq:upperbound_x_update}
        \begin{aligned}
                     & f(\mathbf{x}^{(t+1)}, \mathbf{h}^{(t+1)}) - f(\mathbf{x}^{(t)}, \mathbf{h}^{(t+1)})                                                            \\
                \leq & \psi_{\rho, t}(\mathbf{x}^{(t+1)} | \mathbf{x}^{(t)}) - \psi_{\rho, t}(\mathbf{x}^{(t)} | \mathbf{x}^{(t)})                                    \\
                \leq & - \frac{\beta^{(t)}}{2} (\|\mathbf{x}^{(t+1)} - \mathbf{x}^{(t)}\|_{2}^{2} - \bar{\iota}^2 \|\mathbf{x}^{(t)} - \mathbf{x}^{(t-1)}\|_{2}^{2}).
        \end{aligned}
\end{equation}

\subsection{Convergence Speed Towards $\varepsilon$-Stationary Point}

Adding the results in \eqref{apdx_ineq:upperbound_h_update} and \eqref{apdx_ineq:upperbound_x_update} from $0$ to $t$, we get
\begin{multline}\label{apdx_ineq:lower_bound_summation_f_0_t}
        f(\mathbf{x}^{(0)}, \mathbf{h}^{(0)}) - f^{\star} \geq f(\mathbf{x}^{(0)}, \mathbf{h}^{(0)}) - f(\mathbf{x}^{(t+1)}, \mathbf{h}^{(t+1)})\\
        \geq \sum_{t^{\prime}=0}^{t} \frac{\alpha^{(t^{\prime})}}{2} \left(\|\mathbf{h}^{(t^{\prime}+1)} - \mathbf{h}^{(t^{\prime})}\|_{2}^{2} - \bar{\mu}^2\|\mathbf{h}^{(t^{\prime})} - \mathbf{h}^{(t^{\prime}-1)}\|_{2}^{2} \right) \\
        - \sqrt{2 \alpha^{(t^{\prime})} \varepsilon^{(t^{\prime})}} \|\mathbf{h}^{(t^{\prime}+1)} - \mathbf{h}^{(t^{\prime})}\|_{2}  - \varepsilon^{(t^{\prime})} \\
        + \frac{\beta^{(t^{\prime})}}{2} \left(\|\mathbf{x}^{(t+1)} - \mathbf{x}^{(t^{\prime})}\|_{2}^{2} - \bar{\iota}^2 \|\mathbf{x}^{(t^{\prime})} - \mathbf{x}^{(t^{\prime}-1)}\|_{2}^{2}\right),
\end{multline}
where $f^{\star} > -\infty$ is the lower bound of the objective function. For the first term in \eqref{apdx_ineq:lower_bound_summation_f_0_t}, we have the result in \eqref{apdx_ineq:lower_bound_summation_f_0_t_1}.
\begin{figure*}
        \small\begin{multline}\label{apdx_ineq:lower_bound_summation_f_0_t_1}
                \sum_{t^{\prime}=0}^{t} \frac{\alpha^{(t^{\prime})}}{2} \left(\|\mathbf{h}^{(t^{\prime}+1)} - \mathbf{h}^{(t^{\prime})}\|_{2}^{2} - \bar{\mu}^2\|\mathbf{h}^{(t^{\prime})} - \mathbf{h}^{(t^{\prime}-1)}\|_{2}^{2} \right) = \sum_{t^{\prime}=0}^{t-1} \frac{\alpha^{(t^{\prime})} - \bar{\mu}^2 \alpha^{(t^{\prime}+1)}}{2} \|\mathbf{h}^{(t^{\prime}+1)} - \mathbf{h}^{(t^{\prime})}\|_{2}^{2} + \frac{\alpha^{(t)}}{2} \|\mathbf{h}^{(t+1)} - \mathbf{h}^{(t)}\|_{2}^{2}\\
                \geq \sum_{t^{\prime}=0}^{t} \frac{\alpha^{(t^{\prime})} - \bar{\mu}^2 \alpha^{(t^{\prime}+1)}}{2} \|\mathbf{h}^{(t^{\prime}+1)} - \mathbf{h}^{(t^{\prime})}\|_{2}^{2} \geq \frac{\munderbar{C}_{h} \theta}{2} \sum_{t^{\prime}=0}^{t} \|\mathbf{h}^{(t^{\prime}+1)} - \mathbf{h}^{(t^{\prime})}\|_{2}^{2}.
        \end{multline}
\end{figure*}
Similarly, we also have the result
\begin{multline}\label{apdx_ineq:lower_bound_summation_f_0_t_2}
        \sum_{t^{\prime}=0}^{t} \frac{\beta^{(t^{\prime})}}{2} \left(\|\mathbf{x}^{(t+1)} - \mathbf{x}^{(t^{\prime})}\|_{2}^{2} - \bar{\iota}^2 \|\mathbf{x}^{(t^{\prime})} - \mathbf{x}^{(t^{\prime}-1)}\|_{2}^{2}\right) \\
        \geq \frac{\munderbar{C}_{s}\theta}{2} \sum_{t^{\prime}=0}^{t} \|\mathbf{x}^{(t+1)} - \mathbf{x}^{(t^{\prime})}\|_{2}^{2}.
\end{multline}
By substituting the results in \eqref{apdx_ineq:lower_bound_summation_f_0_t_1} and \eqref{apdx_ineq:lower_bound_summation_f_0_t_2} into \eqref{apdx_ineq:lower_bound_summation_f_0_t}, after some calculations, we get the result in \eqref{apdx_ineq:summation_result}.
\begin{figure*}
        \small\begin{multline}\label{apdx_ineq:summation_result}
                f(\mathbf{x}^{(0)}, \mathbf{h}^{(0)}) - f(\mathbf{x}^{(t+1)}, \mathbf{h}^{(t+1)}) \geq  \frac{\munderbar{C}_{h} \theta}{2} \sum_{t^{\prime}=0}^{t} \|\mathbf{h}^{(t^{\prime}+1)} - \mathbf{h}^{(t^{\prime})}\|_{2}^{2}  - \sqrt{2 \alpha^{(t^{\prime})} \varepsilon^{(t^{\prime})}} \|\mathbf{h}^{(t^{\prime}+1)} - \mathbf{h}^{(t^{\prime})}\|_{2} + \frac{\munderbar{C}_{s}\theta}{2} \sum_{t^{\prime}=0}^{t} \|\mathbf{x}^{(t+1)} - \mathbf{x}^{(t^{\prime})}\|_{2}^{2}  - \varepsilon^{(t^{\prime})}\\
                = \frac{\munderbar{C}_{h} \theta}{2} \sum_{t^{\prime} = 0}^{t} \underbrace{\left(\|\mathbf{h}^{(t^{\prime}+1)} - \mathbf{h}^{(t^{\prime})}\|_{2} - \sqrt{2 \alpha^{(t^{\prime})} \varepsilon^{(t^{\prime})} }/ (\munderbar{C}_{h} \theta) \right)^2}_{\text{denote as } h_{t^{\prime}+1}^2} + \frac{\munderbar{C}_{s}\theta}{2} \sum_{t^{\prime}=0}^{t} \|\mathbf{x}^{(t^{\prime}+1)} - \mathbf{x}^{(t^{\prime})}\|_{2}^{2} - \underbrace{\sum_{t^{\prime} = 0}^{t} \left(1 + \frac{1}{\munderbar{C}_{h} \theta}\right) \varepsilon^{(t^{\prime})}}_{\text{denote as } C_{\varepsilon}}.
        \end{multline}
\end{figure*}
From \eqref{apdx_ineq:summation_result}, we get
\begin{multline}
        f(\mathbf{x}^{(0)}, \mathbf{h}^{(0)}) - f^{\star} + C_{\varepsilon} \geq \min_{t^{\prime} = 0, 1, \dots, t} \frac{\munderbar{C}_{h} \theta t}{4} (h_{t^{\prime}+1}^2+ h_{t^{\prime}}^2) \\
        + \frac{\munderbar{C}_{s} \theta t}{4} (\|\mathbf{x}^{(t^{\prime}+1)} - \mathbf{x}^{(t^{\prime})}\|_{2}^{2} + \|\mathbf{x}^{(t^{\prime})} - \mathbf{x}^{(t^{\prime}-1)}\|_{2}^{2}).
\end{multline}
By using Cauchy-Schwartz inequality, we have
\begin{multline}\label{apdx_ineq:sqrt_func_val_bound}
        \sqrt{f(\mathbf{x}^{(0)}, \mathbf{h}^{(0)}) - f^{\star} + C_{\varepsilon}}\\
        \geq \min_{t^{\prime} = 0, 1, \dots, t} \frac{\sqrt{\munderbar{C}_{h, s} \theta t}}{4} \left(h_{t^{\prime}+1} + h_{t^{\prime}} + \|\mathbf{x}^{(t^{\prime}+1)} - \mathbf{x}^{(t^{\prime})}\|_{2} \right.\\
        \left.+ \|\mathbf{x}^{(t^{\prime})} - \mathbf{x}^{(t^{\prime}-1)}\|_{2}\right),
\end{multline}
where $\munderbar{C}_{h, s} = \min\{\munderbar{C}_{h}, \munderbar{C}_{s}\}$. For the term $\sqrt{t} h_{t^{\prime}+1}$, since we require the sequence $\{\varepsilon^{(t)}\}_{t \geq 0}$ to be summable and decreasing, there exists a constant number $\bar{E}$ to be such that
\begin{equation}\label{apdx_ineq:t_epsilon_bound}
        \sqrt{t} \min_{t^{\prime} = 0, 1, \dots, t} \sqrt{2 \alpha^{(t^{\prime})} \varepsilon^{(t^{\prime})}} \leq \sqrt{2 t \bar{\alpha} \varepsilon^{(t)}}  \leq \bar{E} < \infty, \forall t.
\end{equation}
Therefore, combine the results in \eqref{apdx_ineq:epsilon_subgradient_upperbound}, \eqref{apdx_ineq:sqrt_func_val_bound}, and \eqref{apdx_ineq:t_epsilon_bound} together, we finally get
\begin{equation}
        \operatorname{dist}(\mathbf{0}, \partial_{\varepsilon^{(t)}} F(\mathbf{x}^{(t+1)}, \mathbf{h}^{(t+1)})) \leq \frac{C}{\sqrt{t}},
\end{equation}
where
\begin{equation}
        C = 4 \bar{C}\sqrt{\frac{f(\mathbf{x}^{(0)}, \mathbf{h}^{(0)}) - f^{\star} + C_{\varepsilon}}{\munderbar{C}_{h, s} \theta}} + \left(1 + \frac{2}{\munderbar{C}_{h, s} \theta}\right)\bar{E}.
\end{equation}

}

\bibliographystyle{IEEEtran}
\bibliography{refs}
\vfill
\end{document}